\newcounter{theorem}
\newtheorem{lemma}{Lemma}
\newtheorem{theorem}[lemma]{Theorem}
\newtheorem{proposition}[lemma]{Proposition}
\newtheorem{corollary}[lemma]{Corollary}
\let\geq\geqslant
\let\leq\leqslant
\def\eps{\varepsilon}
\newcommand{\IR}{\mathbb{R}}
\newcommand{\IZ}{\mathbb{Z}}
\newcommand{\poincare}{\mathbb{B}}
\newcommand{\hyperbolic}{\mathbb{H}}
\newcommand{\diam}{\mathrm{diam}}
\DeclareMathOperator{\arcosh}{arcosh}
\DeclareMathOperator{\artanh}{artanh}
\title{Coresets for Farthest Point Problems in Hyperbolic Space 
\thanks{
E. Park was supported by the National Research Foundation of Korea (NRF) grant
funded by the Korea government (MSIT) (No. RS-2024-00414849). E. Park and A. Vigneron were supported by the National Research Foundation of Korea (NRF) grant funded by the Korea government (MSIT) (No. 2022R1F1A107586911).
}
}
\author{Eunku Park\\
Department of Liberal arts and Sciences\\
DGIST, Republic of Korea\\
\texttt{parkeun9@dgist.ac.kr}
\and
Antoine Vigneron\thanks{Corresponding author}\\ 
School of Electrical and Computer Engineering\\ 
UNIST, Republic of Korea\\
\texttt{vigneron.antoine@gmail.com}
}
\begin{document}
\maketitle

\begin{abstract}
We show how to construct in linear time coresets of constant size for farthest point problems in
fixed-dimensional hyperbolic space. Our coresets provide both an arbitrarily small relative error and  additive error $\eps$. 
More precisely, we are given a set $P$ of $n$ points in the hyperbolic space $\hyperbolic^D$, where $D=O(1)$, and an error tolerance $\eps\in (0,1)$. Then we can construct in $O(n/\eps^D)$ time a subset $P_\eps \subset P$ of size $O(1/\eps^D)$ such  that for any query point $q \in \hyperbolic^D$, there is a point $p_\eps \in P_\eps$  that satisfies $d_H(q,p_\eps) \geq (1-\eps)d_H(q,f_P(q))$ and $d_H(q,p_\eps) \geq d_H(q,f_P(q))-\eps$, where $d_H$ denotes the hyperbolic metric and $f_P(q)$ is the point in $P$ that is farthest from $q$ according to this metric. 

This coreset allows us to answer approximate farthest-point queries in time $O(1/\eps^D)$ after $O(n/\eps^D)$ preprocessing time. It yields efficient approximation algorithms for the diameter, the center, and the maximum spanning tree problems in hyperbolic space.
\end{abstract}


\section{Introduction}

Nearest-neighbor searching is a fundamental problem  in computational geometry where, given
a dataset $P$ of $n$ points, we want to quickly return a point of $P$ that is closest to a query point $q$. If interpoint distances can be computed in $O(1)$ time, a query can be answered by brute force in $O(n)$ time, so the goal is to design data structures that answer queries in sublinear time after some preprocessing. For instance, in the Euclidean plane $\IR^2$, nearest-neighbor queries can be answered in $O(\log n)$ time after $O(n \log n)$ preprocessing time by computing the Voronoi diagram of $P$~\cite{berg2008computational}.
Unfortunately, the worst case complexity of the the Voronoi diagram in the $D$-dimensional
Euclidean space $\IR^D$ is $\Theta(n^{\lceil D/2 \rceil})$, so this approach is often impractical in dimension $D \geq 3$. 

In order to address this issue, an approximate version of nearest-neighbor searching, called 
{\it approximate nearest-neighbor searching} (ANN) has been considered where, instead of returning the closest point to $q$, we return a point whose distance from $q$ is at most $(1+\eps)$ times the optimal, for a given relative error tolerance $\eps \in (0,1)$. Arya et al. showed that in fixed-dimensional Euclidean space (i.e. in $\IR^D$ where $D=O(1)$), ANN queries can be answered in $O(\log(n)/\eps^D)$ time after $O(n \log n)$ preprocessing time~\cite{Arya98}.

A related problem is farthest-neighbor searching: Preprocess $P$ so that a point in $P$ that is farthest from a query point $q$ can be found efficiently. This problem can be solved exactly using the {\it farthest-point Voronoi diagram}, but unfortunately, in the worst case, it has the same size $\Theta(n^{\lceil D/2 \rceil})$ as the Voronoi diagram in dimension $D$.

Agarwal et al.~\cite{agarwal1992farthest} gave an efficient data structure for $(1-\eps)$-approximate farthest neighbors in fixed dimension: After $O(n/\eps^{(D-1)/2})$ preprocessing time, a point whose distance from $q$ is at least $(1-\eps)$ times the maximum is returned in $O(1/\eps^{(D-1)/2})$ time. The idea is to compute the set $P_\eps$ of extreme points of $P$ along 
$O(1/\eps^{(D-1)/2})$ directions distributed (roughly) uniformly on the unit sphere. This set $P_\eps$ has size $O(1/\eps^{(D-1)/2})$, and for any query point $q$, an approximate farthest point is in $P_\eps$. Thus, it is a {\it coreset} for farthest point problems: It is a small subset of $P$, such that if we want to answer an approximate farthest-neighbor query on $P$, we can return a point in $P_\eps$ that is farthest from the query point. Coresets have been used for approximating the solutions of several computational geometry problems~\cite{HPbook}.

In this paper, we give a coreset of size $O(1/\eps^D)$ for farthest-point problems in 
fixed-dimensional hyperbolic space $\hyperbolic^D$.  This space is the unique $D$-dimensional
Riemannian manifold with sectional curvature -1 at every point. It has several
isometric models such as the hyperboloid model and the Poincar\'e half-space 
model~\cite{ratcliffe1994foundations}.
For convenience, we will use the {\it Poincar\'e ball} model $(\poincare^D,d_H)$ where $\poincare^D$ is
the open unit ball of $\IR^D$ centered at the origin $O$, and $d_H$ is the metric defined
as follows. (See Section~\ref{sec:ball} for a more detailed introduction to the Poincar\'e ball model.)

The hyperbolic length $L_H(\Gamma)$ of a curve $\Gamma$ in $\poincare^D$ is given by the integral 
$L_H(\Gamma)=\int_{\Gamma} ds$ where 
$$ds^2=4\frac{dx_1^2+\dots+dx_D^2}{\left(1-(x_1^2+\dots+x_D^2)\right)^2}.$$
(In the Euclidean case, $ds$ is given by the relation $ds^2=dx_1^2+\dots+dx_D^2$.) For any two distinct points $u,v \in \poincare^D$, the curve $[u,v]$ of minimum hyperbolic length between $u$ and $v$ is called a {\it geodesic}. The hyperbolic distance $d_H(u,v)$ is the hyperbolic length of $[u,v]$.

\begin{figure}
	\centering
	\includegraphics{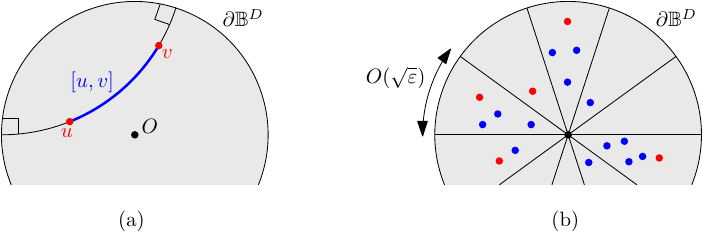}
	\caption{(a) A geodesic $[u,v]$ in the Poincar\'e ball model $\poincare^D$.\label{fig:poincareball}
		(b) Cone based approach: The input points are in blue. We add the red points to the coreset.}
\end{figure}

The hyperbolic distance $d_H(u,v)$ can be obtained by a closed formula. 
(Equations~\eqref{eq:defdH1}--\eqref{eq:defdH2} in Section~\ref{sec:ball}.) It can be shown that the geodesics are arcs of circles that are orthogonal to the unit sphere $\partial \poincare^D$.
(See \figurename~\ref{fig:poincareball}a.) This metric space $(\poincare^D,d_H)$ has several properties that make it very different from the Euclidean space. For instance, the volume of a ball is exponential in its radius, instead of being polynomial, and thus some packing or discretization arguments used in the Euclidean case break down. On the other hand, triangles in hyperbolic space are {\it thin}: There is a constant $\delta$ such that for any $u,v,w \in \poincare^D$, and for any point $t \in [u,w]$, there is a point in $[u,v] \cup [v,w]$ that is at distance at most $\delta$ from $u$. (For a more precise statement, see Section~\ref{sec:geogromov}.) 

One motivation for studying algorithms in hyperbolic spaces is that computer networks, and in particular the internet, are known to have better embeddings into hyperbolic space than into Euclidean spaces~\cite{1354510}. Recently, connections have also been found between machine learning and hyperbolic geometry, in particular in the context of hyperbolic neural 
networks~\cite{NEURIPS2018,Nature}.

\paragraph{Our results.}

We assume that we are in fixed dimension $D=O(1)$, and that $\eps \in (0,1)$.
We are given an input set $P$ of $n$ points in $\poincare^D$. 
For any point $q \in \poincare^D$, we denote by $f_P(q)$ a point in $P$ that is
farthest from $q$:
\[
	d_H(q,f_P(q))=\max_{p \in P} d_H(q,p).
\]
We show (Theorem~\ref{th:faralgo}) how to construct in $O(n/\eps^D)$ time a coreset
$P_\eps \subset P$ such that for any query point $q$, there is a point $p_\eps \in P_\eps$
that satisfies
\[
	d_H(q,p_\eps) \geq (1-\eps)\max_{p \in P} d_H(q,f_P(q)) \quad \text{ and } \quad
	d_H(q,p_\eps) \geq \max_{p \in P} d_H(q,f_P(q)) - \eps. 
\]
Thus, our coreset provides both a $(1-\eps)$-{\it multiplicative} and  an $\eps$-{\it additive} error bound. It is obviously impossible to obtain a constant additive error in Euclidean space.

It directly allows us to answer approximate farthest-neighbor queries in time $O(1/\eps^D)$, after $O(n/\eps^D)$ preprocessing time, by returning the point in $P_\eps$ that is farthest from the query point $q$. Again, we obtain both an additive and a multiplicative error bound.

Our coreset yields approximation algorithms for other problems in $\hyperbolic^D$.
\begin{itemize}
\item
	The {\it diameter} $\diam(P)$ of $P$ is the maximum interpoint distance in $P$, so  $\diam(P)=\max_{u,v \in P} d_H(u,v)$. We obtain a $(1-\eps)$-multiplicative and $\eps$-additive error to $\diam(P)$ in $O(n/\eps^D)$ time by performing an approximate farthest neighbor for each point in $P$, and returning the largest result.
\item A {\it center} of $P$ is a point $c \in P$ such that $d_H(c,f_P(c))$ is minimized. We can find an approximate radius (both additive and multiplicative) in $O(n/\eps^D)$ by performing an approximate farthest neighbor query on every point in $P$,
\item We can compute an approximate {\it bichromatic closest pair} in $O(n/\eps^D)$ time and an approximate {\it maximum spanning tree} in $O(n \log^2(n)/\eps^D)$ time using the reduction by Agarwal et al.~\cite{agarwal1992farthest}. For the maximum spanning tree problem, we only get a $(1-\eps)$-multiplicative approximation, and not an additive error.
\end{itemize}

\paragraph{Our approach.} There are two main cases in our construction. In Section~\ref{sec:smalldiameter}, we deal with small diameter input. We observe that, after applying a suitable isometry, the hyperbolic distance is within a constant factor from the Euclidean metric. Then we obtain $P_\eps$ by a bucketing approach, using a regular grid of mesh size $O(\eps)$. This is a well-known approach in geometric approximation algorithms~\cite{HPbook}.

In Section~\ref{sec:far}, we consider large diameter input, where $\diam(P) \geq 5$.
In order to handle query points at distance $O(1)$ from $O$, we use a grid-based approach similar
to the small diameter case, which yields $O(1/\eps^D)$ coreset points.
At large scale, a simple grid-based discretization fails as the volume of a hyperbolic ball grows exponentially with its radius. Instead, we use a cone-based approach. 

We partition the space around $O$ into $O(1/\eps^{(D-1)/2})$ cones of angular diameter $O(\sqrt{\eps})$. In each cone, we add to our coreset the point that is farthest from $O$. (See \figurename~\ref{fig:poincareball}b.) Then we show, based on geometric arguments presented in Section~\ref{sec:geopoinc}, that the points we added to our coreset can handle query points $q$ such that the geodesic $[q,f_P(q)]$  passes close enough to $O$.

In order to reduce to the case where $[q,f_P(q)]$ is close to $O$, we apply a set of $O(1/\eps^{(D-1)/2})$ isometries of $\poincare^D$, each isometry mapping to the origin $O$ a point taken from an approximately uniform sample of a constant radius sphere centered at $O$. For each such translation, we apply the cone-based approach above. It yields $O(1/\eps^{D-1})$ more coreset points. Our correctness proof uses properties of Gromov-hyperbolic space, to which we give an introduction in Section~\ref{sec:Gromov}, including a linear-time algorithm by Chepoi et al.~\cite{chepoi2008diameters} for approximating the diameter of a Gromov-hyperbolic space. 

\paragraph{Comparison with previous work.} Euclidean nearest-neighbor searching has been extensively studied. For instance, Clarkson gave a data structure with $O(\log n)$ query time and $\tilde O(n^{\lceil D/2 \rceil})$ preprocessing time~\cite{Clarkson88}. As mentioned above, 
Arya et al.~\cite{Arya98} gave a data structure for the approximate version (ANN) with approximation factor $(1+\eps)$, query time $O(\log(n)/\eps^D)$ time and $O(n \log n)$ preprocessing time~\cite{Arya98}.
A coreset other than the input set itself is not possible for ANN in Euclidean space, as any input point $p$ is the result of the query $q=p$.

As mentioned above, Agarwal et al.~\cite{agarwal1992farthest} gave a coreset of size $O(1/\eps^{(D-1)/2})$ for farthest-neighbor queries in the Euclidean space $\IR^D$. Their coreset only provides a $(1-\eps)$-multiplicative error, as the only coreset with constant additive error in $\IR^D$ is $P_\eps=P$ in the worst case.  
Another coreset for approximate farthest-neighbor searching was given recently by de Berg and Theocharou~\cite{BergT24}. They showed how to construct a coreset of size $O(1/\eps^2)$ for farthest-neighbor queries in a simple polygon, using the geodesic distance. 
Pagh et al.~\cite{pagh2015approximate} studied approximate farthest-neighbor searching in high dimension (i.e. when we do not assume that $D=O(1)$), with query time $\tilde O(n^{1/c^2})$
and space usage $O(n^{1/(2c^2)})$ in dimension $O(\log n)$.

Recently, two-dimensional Voronoi diagrams~\cite{gezalyan} and farthest-point Voronoi diagrams~\cite{SongJA25} have been studied under the Hilbert metric. This metric generalizes the Cayley-Klein model of $\hyperbolic^D$ to any convex polygon, instead of the unit disk.

To the best of our knowledge, there is no previous work on farthest-neighbor searching in the hyperbolic space $\hyperbolic^D$ when $D \geq 3$. However, Chepoi et al.~\cite{chepoi2008diameters} studied related problems in the more general setting of Gromov $\delta$-hyperbolic spaces. A metric space $(X,d)$ is
$\delta$-hyperbolic if for any $t,u,v,w \in M$, the two largest sums among $d(t,u)+d(v,w)$, $d(t,v)+d(u,w)$ and $d(t,w)+d(u,v)$ differ by at most $2\delta$. The hyperbolic space $\hyperbolic^D$ is Gromov hyperbolic, with $\delta=\log 3$. (See Section~\ref{sec:Gromov} for a brief introduction to Gromov hyperbolic spaces, with an equivalent definition.) 
Chepoi et al.~\cite{chepoi2008diameters} showed that the diameter of a $\delta$-hyperbolic space can be approximated with additive error $2\delta$ in linear time. With our approach, the diameter of a subset of $\hyperbolic^D$ can be approximated in $O(n/\eps^D)$ time with an additive error $\eps$, so we obtain a better approximation error, but in a more restricted setting. They provide a similar result for approximating the radius and the eccentricities of all points, which we can improve in the same way, still in the special case of $\hyperbolic^D$.

Approximate near-neighbor searching has also been studied in the context of hyperbolic spaces. Krauthgamer and Lee~\cite{krauthgamer2006algorithms} gave a data structure for ANN in a special case of $\delta$-hyperbolic spaces, that applies to $\hyperbolic^D$, with $O(\delta)$ additive error, $O(n^2)$ space usage and $O(\log^2 n)$ query time. More recently, 
Kisfaludi-Bak and Wordragen~\cite{kisfaludi2024quadtree} gave an ANN data structure for $\hyperbolic^D$ with $(1+\eps)$ factor approximation, $O(n \log(1/\eps)/\eps^D)$ size
and $O(\log n \log(1/\eps)/\eps^D)$ query time. In our previous work~\cite{park2025embeddings},  we gave a data structure for ANN in $\hyperbolic^D$ with $O(1)$ additive error, $O(\log n)$ query time and $O(n\log n)$ construction time.

\section{Preliminaries}

In this paper, $P = \{ p_{1}, ..., p_{n} \}$ denotes a set of $n$ points in a metric space $(X,d)$.
This metric space will be either the Poincar\'e ball model $(\poincare^D,d_H)$ 
of the $D$-dimensional hyperbolic space $\hyperbolic^D$
(Section~\ref{sec:ball}), or a Gromov hyperbolic space (Section~\ref{sec:Gromov}).
For any point $q \in X$, we denote by $f_P(q)$ a point in $P$ that is farthest to $q$, hence
\[
	d(f_P(q),q)=\max_{p \in P} d(p,q).
\]
The {\it diameter} of $P$ is the maximum distance
$
\diam(P) = \max_{p,q \in P} d(p,q),
$
and a {\it diametral pair} is a pair of points $a^*,b^* \in P$ such that $\diam(P)=d(a^*,b^*)$.

The metric space $(X,d)$ is a {\it geodesic space} if, 
between any two points $p$ and $q$, there exists a shortest path $[p,q] \subset X$, called a {\it geodesic}.
More precisely, there is an isometry $\gamma:[0,d(p,q)] \to X$ such that $\gamma(0)=p$ and
$\gamma(d(p,q))=q$.
For any 3 points $p,q,r \in X$, the {\it geodesic triangle} $[p,q,r]$ is the union of three geodesics
$[p,q]$, $[q,r]$, $[r,p]$, called its {\it sides}. 

In this paper, we use the natural logarithm, denoted by $\log(\cdot)=\log_e(\cdot)$.

\subsection{The Poincar\'e Ball Model}\label{sec:ball}
The hyperbolic space $\hyperbolic^D$ is the $D$-dimensional space of constant sectional curvature -1.
Several isometric models of $\hyperbolic^D$ have been considered, including the
Poincar\'e half-space model and the hyperboloid model. We will use
the {\it Poincar\'e ball model}, also called {\it conformal ball model}~\cite{ratcliffe1994foundations}.
In this model, $\hyperbolic^D$ is identified with the open unit ball
$
\poincare^{D} = \{ x \in \mathbb{R}^{D} : \| x \| < 1 \}.
$
It is equipped with the {\it hyperbolic metric} $d_H$, given by the expressions
\begin{align} \label{eq:defdH1}
d_H (u,v)  
& = \arcosh \left( 1 + \frac{2 \| u-v\|^{2}}{(1 - \| u \|^{2}) (1 - \| v \|^{2})} \right) \\ 
\label{eq:defdH}
& = 2 \log \frac{\|u-v\|+\sqrt{\|u\|^2\|v\|^2-2u \cdot v+1}}{\sqrt{(1 - \| u \|^{2})(1 - \| v \|^{2})}},
\end{align}
where $\| \cdot \|$ is the Euclidean norm. 
In the special case where  $\|v\|=r$, we have
\begin{equation}\label{eq:defdH2}
d_H (O,v) = 2 \artanh r = \log\left(\frac{1+r}{1-r}\right).
\end{equation}
Using this metric $d_H$, 
geodesics are arcs of circles orthogonal to the boundary sphere $\partial \poincare^D$. 
The distance $d_H(u,v)$ goes to infinity when $\|v\|$ goes to 1, so the boundary $\partial \poincare^D$ of the Poincar\'e ball can be regarded as the set of points at infinity.

\paragraph{Isometries.}
We will need to be able to change the center $O=(0, \dots,0)$ of the Poincar\'e ball in order
to simplify our calculations. More precisely, 
let $h$ be a point in $\poincare^D$.  The {\it hyperbolic translation} $\tau_h$ is an isometry
of $(\poincare^D,d_H)$ such that $\tau_h(O)=h$. It is given by the 
expression~\cite{ratcliffe1994foundations}:
\begin{equation*}
\tau_{h} (u) = \frac{(1- \| h \|^{2})u + ( \| u \|^{2} + 2 \langle u, h \rangle + 1) h}
{\| h \|^{2} \| u \|^{2} + 2 \langle u, h \rangle + 1}
\end{equation*}
where $\langle u, h \rangle$ denotes the standard Euclidean inner product in $\mathbb{R}^{D}$.
This map can be computed in constant time, as well as its inverse $\mu_h=\tau_h^{-1}$. Then
$\mu_h$ is an isometry that maps $h$ to $O$, and can be computed in constant time. It follows that:
\begin{proposition}\label{prop:translation}
		Let $h$ be an arbitrary point in $\poincare^D$ where $D=O(1)$. 
		There is an isometry $\mu_h$ of $(\poincare^D,d_H)$ 
		such that $\mu_h(h)=O$, and for any point $p \in \poincare^D$, we can compute $\mu_h(p)$ in 
		$O(1)$ time.
\end{proposition}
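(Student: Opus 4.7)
The proof plan is direct because the formula for the hyperbolic translation $\tau_h$ has just been stated, together with the fact that it is an isometry satisfying $\tau_h(O) = h$. My first step is to observe that the same formula with $h$ replaced by $-h$ yields another well-defined isometry $\tau_{-h}$ of $(\poincare^D, d_H)$: since $\|-h\| = \|h\| < 1$, the point $-h$ lies in $\poincare^D$, and the citation to Ratcliffe's text covers the translation formula for any point of the ball.

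Next, I set $\mu_h := \tau_{-h}$ and verify $\mu_h(h) = O$ by direct substitution. Plugging $u = h$ into the formula and replacing the parameter $h$ by $-h$, the numerator equals
\[
(1 - \|h\|^2) h + \bigl(\|h\|^2 - 2\|h\|^2 + 1\bigr)(-h) = (1-\|h\|^2)h - (1-\|h\|^2)h = 0,
\]
while the denominator $\|h\|^4 - 2\|h\|^2 + 1 = (1-\|h\|^2)^2$ is strictly positive, so indeed $\mu_h(h) = O$. For the complexity claim, evaluating $\mu_h(p)$ reduces to computing $\|h\|^2$, $\|p\|^2$, the Euclidean inner product $\langle p, h\rangle$, and then a constant number of scalar multiplications, vector additions, and a single division. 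In fixed dimension $D = O(1)$ each of these steps takes $O(1)$ time, so $\mu_h(p)$ is computable in $O(1)$ time.

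There is essentially no obstacle here; the only thing requiring care is the sign bookkeeping in the verification that $\mu_h(h) = O$. An alternative would be to invert the Möbius transformation $\tau_h$ algebraically and exhibit $\tau_h^{-1}$ in closed form, but taking $\mu_h = \tau_{-h}$ is shorter and already suffices for the statement, since the proposition only requires \emph{some} isometry mapping $h$ to $O$ that is computable in constant time.
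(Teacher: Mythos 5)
Your proof is correct and takes essentially the same approach as the paper, which simply takes $\mu_h=\tau_h^{-1}$ and notes that it is an isometry mapping $h$ to $O$ computable in $O(1)$ time. Your choice $\mu_h=\tau_{-h}$ is in fact this same inverse map, and your direct substitution showing $\tau_{-h}(h)=O$ together with the constant-time evaluation of the formula makes explicit what the paper only asserts.
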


\section{Small Diameter Input}\label{sec:smalldiameter}

In this section, we first deal with the case where the input point set $P$ has constant
diameter. So we first compute $\Delta_{1}=d_H(p_1,f_P(p_1))$, and thus 
$
	\Delta_1 \leq \diam(P) \leq 2\Delta_1.
$
We assume that $\Delta_1 \leq 5$, and we apply the isometry $\mu_{p_1}$ so that $p_1$
is the origin $O$ of the Poincar\'e ball $\poincare^D$.
We first observe the following
\begin{lemma}\label{lem:half}
For any $q \in \poincare^{D}$, we have $d_H(q,f_P(q)) \geq \Delta_{1}/2$.
\end{lemma}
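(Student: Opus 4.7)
The plan is to use the triangle inequality on $p_1$, $q$, and $f_P(p_1)$. By the definition of $f_P$, we have $d_H(p_1, f_P(p_1)) = \Delta_1$, and the triangle inequality applied to this pair through $q$ gives
\[
\Delta_1 \;=\; d_H(p_1, f_P(p_1)) \;\leq\; d_H(p_1, q) + d_H(q, f_P(p_1)).
\]
Hence at least one of the summands on the right is at least $\Delta_1/2$.

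Since both $p_1$ and $f_P(p_1)$ are points of $P$, we have $d_H(q, f_P(q)) \geq \max\bigl(d_H(q,p_1),\, d_H(q,f_P(p_1))\bigr)$, so $d_H(q, f_P(q)) \geq \Delta_1/2$, as required. There is no real obstacle here: the argument is a one-line triangle-inequality observation that does not even use the hyperbolic structure beyond $(X,d_H)$ being a metric space, and in particular it holds for every $q \in \poincare^D$, not just query points near $O$.
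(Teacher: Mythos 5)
Your argument is correct and is essentially identical to the paper's own proof: both apply the triangle inequality to $p_1$, $q$, and $f_P(p_1)$, observe that one of the two summands must be at least $\Delta_1/2$, and conclude since both endpoints lie in $P$.
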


\begin{proof}
Let $q_1=f_P(p_1)$. By the triangle inequality, 
$\Delta_1=d_{H} (p_1, q_1) \leq d_{H} (p_1,q) + d_{H} (q,q_1)$ holds. 
It implies that $d_H(p_1,q) \geq \Delta_1/2$ or $d_H(q,q_1) \geq \Delta_1/2$,
and the result follows.
\end{proof}

Let $B_H(O,\Delta_1 )$ be the ball centered at $O$ with hyperbolic radius $\Delta_1$, and thus
with Euclidean radius $R=\tanh(\Delta_1/2)\leq \tanh(5/2)$. 

\begin{lemma}\label{lem:smalldiameter1}
	For any two points $u,v \in B_H(O,\Delta_1)$, we have
	\[
		2\|uv\|  \leq d_H(u,v)  < 76\|uv\|.
	\]
\end{lemma}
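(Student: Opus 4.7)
The plan is to exploit the conformal structure of the Poincar\'e ball model. The hyperbolic line element relates to the Euclidean one by $ds = \frac{2}{1-\|x\|^2} ds_E$, where $ds_E$ is the standard Euclidean line element. So for any rectifiable curve $\Gamma \subset \poincare^D$, its hyperbolic length is the Euclidean integral $L_H(\Gamma) = \int_\Gamma \frac{2}{1-\|x\|^2} ds_E$. Now $B_H(O,\Delta_1)$ is exactly the Euclidean ball of radius $R = \tanh(\Delta_1/2) \leq \tanh(5/2)$, and on this ball the conformal factor $2/(1-\|x\|^2)$ ranges in $[2,\, 2/(1-R^2)]$. Both inequalities in the lemma will follow by integrating these pointwise bounds along well-chosen curves.

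For the lower bound, I would note that $2/(1-\|x\|^2) \geq 2$ everywhere in $\poincare^D$, not just on the ball. Hence for \emph{any} curve $\Gamma$ from $u$ to $v$, one has $L_H(\Gamma) \geq 2\, L_E(\Gamma) \geq 2\|uv\|$. Applying this to the hyperbolic geodesic $[u,v]$ itself yields $d_H(u,v) \geq 2\|uv\|$.

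For the upper bound, I would use the Euclidean line segment $\Gamma_E$ from $u$ to $v$ as a competitor path. Since the Euclidean ball of radius $R$ centered at $O$ is Euclidean-convex, $\Gamma_E$ lies entirely inside $B_H(O,\Delta_1)$, so along it the conformal factor is at most $2/(1-R^2)$. This gives $d_H(u,v) \leq L_H(\Gamma_E) \leq \frac{2}{1-R^2}\|uv\|$. Using the identity $1-\tanh^2(t) = 1/\cosh^2(t)$ with $t = \Delta_1/2$ and the bound $\Delta_1 \leq 5$, this becomes $d_H(u,v) \leq 2\cosh^2(5/2)\,\|uv\|$. A brief numerical check (e.g., $\cosh(5/2) < (e^{5/2}+1)/2 < 6.14$, so $2\cosh^2(5/2) < 75.5$) gives the stated constant $76$. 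The only obstacle here is the mild numerical verification of the constant; the structural content is just the conformal-factor sandwich, which is standard for the Poincar\'e ball.
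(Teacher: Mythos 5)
Your proof is correct and follows essentially the same route as the paper: the lower bound comes from the conformal factor being at least $2$ along the geodesic (together with the Euclidean segment minimizing Euclidean length), and the upper bound comes from using the Euclidean segment $\overline{uv}$ as a competitor path and bounding the conformal factor by $2/(1-R^2)=2\cosh^2(\Delta_1/2)$ on the ball. The only blemish is the parenthetical numerical justification: $(e^{5/2}+1)/2\approx 6.59$, which is not $<6.14$; however $\cosh(5/2)=(e^{5/2}+e^{-5/2})/2\approx 6.13<6.14$, so $2\cosh^2(5/2)\approx 75.2<76$ and the stated constant indeed holds.
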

\begin{proof}	
	Let $[u,v]$ be the geodesic from $u$ to $v$ and $\overline{uv}$ be the line
	segment from $u$ to $v$. Then we have
	\[ 		2\|uv\|  =  \int_{\overline{uv}} 2\|dx\| 
			 \leq \int_{[u,v]} 2\|dx\| 
			\leq \int_{[u,v]} \frac{2\|dx\|}{1-\|x\|^2} 
			 =d_H(u,v) 
	\]
	and
	\begin{equation*}
		d_H(u,v)  = \int_{[u,v]} \frac{2\|dx\|}{1-\|x\|^2} \leq  \int_{\overline{uv}} \frac{2\|dx\|}{1-\|x\|^2} 
			  \leq  \frac{2}{1-R^2} \int_{\overline{uv}} \|dx\| = \frac{2}{1-R^2} \|uv\| < 76 \|uv\|.
	\end{equation*}
\end{proof}

The $\alpha$-grid $G_\alpha$ in $\IR^D$ is the discrete set of points whose coordinates are
multiples of $\alpha$, so we have $G_\alpha=\alpha \IZ^D$. For each point $p \in (\IR^+)^D$,
the point of $G_\alpha$ obtained by rounding down each coordinate to the nearest multiple of
$\alpha$ is at Euclidean distance at most $\alpha \sqrt D$ from $p$ and is closer to $O$ than $p$ is.
It follows that
\begin{proposition}\label{prop:grid}	
	For each point $p \in B_H(O,\Delta_1)$, there is a point $p_\alpha \in G_\alpha \cap B_H(O,\Delta_1)$
	such that $d_H(p,p_\alpha) \leq 76 \alpha \sqrt D$.
\end{proposition}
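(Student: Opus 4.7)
The plan is to generalize the rounding observation immediately preceding the proposition from the positive orthant to an arbitrary point of the Poincar\'e ball, and then translate the resulting Euclidean bound into a hyperbolic one by invoking Lemma~\ref{lem:smalldiameter1}.

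First I would define $p_\alpha$ coordinate-wise: for each coordinate $p^{(i)}$ of $p$, set $p_\alpha^{(i)}$ to the multiple of $\alpha$ obtained by rounding $p^{(i)}$ \emph{towards zero} (so rounding down if $p^{(i)} \geq 0$ and up if $p^{(i)} < 0$). This is the natural extension of the text's construction from $(\IR^+)^D$ to all of $\IR^D$, obtained by applying the sign-symmetry of $G_\alpha$. Two elementary facts follow immediately: each coordinate moves by at most $\alpha$, so $\|p - p_\alpha\| \leq \alpha\sqrt{D}$; and $|p_\alpha^{(i)}| \leq |p^{(i)}|$ for every $i$, so $\|p_\alpha\| \leq \|p\|$.

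Next I would use these two facts to verify the two conclusions of the proposition. Since $p \in B_H(O,\Delta_1)$ and this hyperbolic ball coincides with the Euclidean ball of radius $R$ centered at $O$ (by~\eqref{eq:defdH2}, with $R=\tanh(\Delta_1/2)$), the inequality $\|p_\alpha\| \leq \|p\| \leq R$ shows that $p_\alpha$ also lies in $B_H(O,\Delta_1)$, so $p_\alpha \in G_\alpha \cap B_H(O,\Delta_1)$ as required. In particular both $p$ and $p_\alpha$ are points of $B_H(O,\Delta_1)$, so the upper bound of Lemma~\ref{lem:smalldiameter1} applies and yields
\[
	d_H(p,p_\alpha) < 76 \|p - p_\alpha\| \leq 76 \alpha \sqrt{D},
\]
which is the claimed bound.

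There is really no substantive obstacle here: the only thing to notice is that rounding towards zero (rather than always rounding down) preserves membership in the Euclidean ball in every orthant, after which the hyperbolic estimate is a direct application of the previous lemma. The constant $76$ is inherited verbatim from Lemma~\ref{lem:smalldiameter1} and requires no further work.
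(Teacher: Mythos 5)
Your proof is correct and is essentially the same argument the paper gives informally in the text preceding the proposition (rounding each coordinate toward zero, observing that this moves $p$ by at most $\alpha\sqrt{D}$ in Euclidean distance without increasing the norm, and then applying Lemma~\ref{lem:smalldiameter1}). Your explicit note that rounding \emph{toward zero} is the correct sign-symmetric extension of the paper's ``round down in $(\IR^+)^D$'' is a reasonable and accurate fill-in of a detail the paper leaves implicit.
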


So we construct our coreset $P_\eps$ for farthest-point problems as follows. We set
$\alpha=\eps \Delta_1/ (304 \sqrt D)$. We use a bucketing approach. 
We place any two points $p,p' \in P$ such that $p_\alpha=p'_\alpha$ 
in the same bucket, and keep only one point from each bucket, obtaining a subset $P_\eps \subset P$
such that for any point in $p \in P$, there is a point $p'' \in P_\eps$ that satisfies $p_\alpha=p''_\alpha$.
This construction can be done in $O(n)$ time, and we have $|P_\eps|=O(1/\alpha^{D})=O(1/\eps^D)$.

For any query point $q \in \poincare^D$, let $q'=f_P(q)$. Then there is a point $q'' \in P_\eps$
such that $q'_\alpha=q''_\alpha$. It follows from Proposition~\ref{prop:grid} that 
\[
	d_H(q',q'') \leq d_H(q',q'_\alpha)+d_H(q''_\alpha,q'') \leq 152\alpha\sqrt{D}=\eps\Delta_1/2.
\]
Then by Lemma~\ref{lem:half}, 
\[
	d_H(q,q'') \geq d_H(q,q')-d_H(q',q'') \geq d_H(q,q')- \eps \Delta_1/2 \geq (1-\eps) d_H(q, q')
\]
and thus $q''$ is an $\eps$-approximate farthest neighbor of $q$ in $P$. So we proved the following.

\begin{lemma}\label{lem:small}
	Let $D$ be a fixed integer and $0 < \eps < 1$.
	Let $P=\{p_1,\dots,p_n\}$ be a set of $n$ points in $\poincare^D$ 
	such that $\Delta_1=d_H(p_1,f_P(p_1))$ satisfies $\Delta_1 \leq 5$.
	We can construct in $O(n)$ time a coreset $P_\eps \subset P$ for farthest-point queries of 
	size $|P_\eps|=O(1/\eps^D)$. In particular, for any query point $q \in \poincare^D$,
	there is a point $p_{\eps} \in P_{\eps}$ such that 
	$
		d_H(q,p_{\eps}) \geq (1-\eps) d_H(q,f_P(q)).
	$
\end{lemma}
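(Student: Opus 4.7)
The plan is to carry out the bucketing construction already outlined in the paragraphs preceding the statement and verify that it meets both the size and accuracy bounds. First I would apply the isometry $\mu_{p_1}$ from Proposition~\ref{prop:translation} so that $p_1$ is sent to the origin $O$. Since $\Delta_1\le 5$, the distance formula~\eqref{eq:defdH2} shows that every point of $P$ lies inside the Euclidean ball of radius $R=\tanh(\Delta_1/2)\le\tanh(5/2)$, which is bounded away from $1$. Within this ball, Lemma~\ref{lem:smalldiameter1} gives a global bi-Lipschitz comparison between the Euclidean and the hyperbolic metric; this is the ingredient that makes a plain Euclidean grid useful.

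Next I would discretize using the axis-aligned grid $G_\alpha$ with spacing $\alpha=\eps\Delta_1/(304\sqrt{D})$. Scan through $P$ once; for each $p$, round to its cell $p_\alpha$ in $O(1)$ time and keep a single representative per occupied cell. Because the input sits in a Euclidean box of side $2R=O(1)$, the number of occupied cells is $O((R/\alpha)^D)=O(1/\eps^D)$, which matches the claimed size, and the total construction time is $O(n)$.

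For correctness, fix a query point $q\in\poincare^D$, let $q'=f_P(q)$, and let $q''\in P_\eps$ be the representative of the cell containing $q'_\alpha$, so that $q'_\alpha=q''_\alpha$. Applying Proposition~\ref{prop:grid} to both $q'$ and $q''$ together with the triangle inequality yields
\begin{equation*}
d_H(q',q'')\;\le\; d_H(q',q'_\alpha)+d_H(q''_\alpha,q'')\;\le\;152\alpha\sqrt{D}\;=\;\eps\Delta_1/2.
\end{equation*}
Since Lemma~\ref{lem:half} gives $d_H(q,q')\ge\Delta_1/2$, one more triangle-inequality step gives $d_H(q,q'')\ge d_H(q,q')-\eps\Delta_1/2\ge(1-\eps)\,d_H(q,q')$, which is exactly the guarantee required.

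The main obstacle, and the reason the lemma is restricted to $\Delta_1\le 5$, is the blow-up of the constant $2/(1-R^2)$ in Lemma~\ref{lem:smalldiameter1}: once $R$ approaches $1$, the Euclidean grid spacing needed to capture the hyperbolic $\eps$-scale becomes exponentially small, and the number of buckets can no longer be controlled by $O(1/\eps^D)$. Guaranteeing that $R$ stays bounded away from $1$ is exactly what the hypothesis $\Delta_1\le 5$ buys us, and the large-diameter regime is deferred to the separate cone-based treatment of Section~\ref{sec:far}.
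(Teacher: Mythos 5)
Your proposal is correct and follows essentially the same route as the paper: translate $p_1$ to the origin, use the bi-Lipschitz comparison of Lemma~\ref{lem:smalldiameter1} to justify an $O(\eps\Delta_1)$-spaced Euclidean grid, keep one representative per occupied cell, and combine Proposition~\ref{prop:grid} with Lemma~\ref{lem:half} to get the $(1-\eps)$ guarantee. The only cosmetic point is that the cell count is $O(1/\eps^D)$ because $R=\tanh(\Delta_1/2)\leq\Delta_1/2$ scales with $\Delta_1$ (not merely because $R=O(1)$), but this matches the bound as stated and is not a gap.
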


\section{Geometric Lemmas for the Poincar\'e Ball Model}  \label{sec:geopoinc}

In this section, we gather a few lemmas that will be needed for our coreset construction
when $\diam(P)>5$.
We will use the following inequalities, which can be obtained by 
Taylor expansion.

\begin{lemma}\label{lem:Taylor}
(a) $\cos x \geq 1 - \frac{x^{2}}{2}$ for all $x$.
(b) $\sqrt{1-x} \geq 1 - x$ for all $0 \leq x \leq 1$.
(c) $\frac{1}{1-x} \leq 1+2x$ for all $0 \leq x \leq 1/2$.
(d) $\log (1+x) \leq x$ for all $x$.
\end{lemma}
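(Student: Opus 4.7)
The plan is to establish each of the four inequalities by reducing it to a sign check on an auxiliary function, using only elementary calculus. None of the four should be difficult; the only minor subtlety is part (a) because the claim is uniform in $x \in \IR$.

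For part (a), I would define $f(x)=\cos x -1 +x^2/2$, note that $f(0)=f'(0)=0$, and compute $f''(x)=1-\cos x\ge 0$. Since $f''\ge 0$ the derivative $f'$ is non-decreasing, and because $f'(0)=0$ this forces $f'(x)\ge 0$ for $x\ge 0$ and $f'(x)\le 0$ for $x\le 0$. Hence $f$ attains its global minimum at $0$, giving $f(x)\ge 0$ everywhere, which is exactly (a). (Alternatively, one could note that $f$ is even and argue on $[0,\infty)$ only, but the convexity argument avoids the case split.)

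Parts (b) and (c) are purely algebraic. For (b), on the range $0\le x\le 1$ both sides are nonnegative, so squaring preserves the inequality; the claim then reduces to $1-x\ge(1-x)^2$, i.e.\ $x(1-x)\ge 0$, which is obvious on $[0,1]$. For (c), on the range $0\le x\le 1/2$ we have $1-x>0$, so multiplying through gives the equivalent statement $1\le(1+2x)(1-x)=1+x-2x^2$, i.e.\ $x(1-2x)\ge 0$, which again holds on $[0,1/2]$.

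For part (d), where the natural domain is $x>-1$, I would set $g(x)=x-\log(1+x)$ and compute $g'(x)=1-\tfrac{1}{1+x}=\tfrac{x}{1+x}$. This is negative on $(-1,0)$ and positive on $(0,\infty)$, so $g$ has a global minimum at $x=0$ with value $g(0)=0$. Therefore $g(x)\ge 0$, i.e.\ $\log(1+x)\le x$, for every $x>-1$. The only place that might trip up a careless reader is that the statement ``for all $x$'' in (d) has to be interpreted on the domain where $\log(1+x)$ is defined; this is standard and needs no further comment.
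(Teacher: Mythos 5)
Your proposal is correct. Note that the paper does not actually write out a proof of this lemma; it simply asserts that the four inequalities ``can be obtained by Taylor expansion,'' so your write-up supplies details the authors omit. For (a) and (d) your arguments (second-derivative/monotone-derivative analysis of $f(x)=\cos x-1+x^2/2$ and $g(x)=x-\log(1+x)$) are just the rigorous form of the Taylor-with-remainder argument the paper has in mind, and they are carried out correctly, including the global statement in (a) and the domain remark $x>-1$ in (d) (the paper only ever applies (d) to small positive arguments, so this is harmless). For (b) and (c) you deviate from the suggested route and argue by direct algebra — squaring in (b), clearing the positive denominator in (c) to reduce to $x(1-x)\geq 0$ and $x(1-2x)\geq 0$ — which is cleaner than a literal series expansion, since e.g.\ proving (c) via $1/(1-x)=1+x+x^2+\cdots$ would require bounding the geometric tail by $x/(1-x)\leq 2x$ anyway. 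All four verifications, including the equality case $x=1/2$ in (c), check out, so the lemma is fully established by your argument.
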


For a geodesic triangle  $[O,u,v]$ in $\poincare^D$ where $Ou$ and $Ov$  make an angle $\pi-\theta$, the lemma below shows that the triangle inequality is within an additive error $\theta^2$ from equality. (See \figurename~\ref{fig:circles}.)

\begin{figure}
	\centering
	\includegraphics{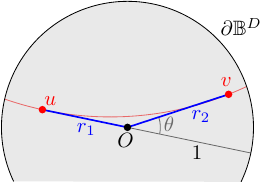}
	\caption{Lemma~\ref{lem:theta}.\label{fig:circles}}
\end{figure}

\begin{lemma}\label{lem:theta}
Let $u,v$ be two points in $u,v \in \poincare^{D}$ and let  $\theta = \pi-\angle uOv$.
If $\theta < 1$,  then we have $d_H (u, O) + d_H (O, v)  \leq  d_H (u,v) + \theta^{2}$.
\end{lemma}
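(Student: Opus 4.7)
The plan is to reduce the inequality to a purely algebraic statement in $r_u := \|u\|$, $r_v := \|v\|$, and $\theta$, and then apply the Taylor-type bounds of Lemma~\ref{lem:Taylor}. First I would combine the closed-form expressions \eqref{eq:defdH} and \eqref{eq:defdH2}, using that $u\cdot v = -r_u r_v \cos\theta$ since $\angle uOv = \pi - \theta$, to rewrite the target inequality as
\[
2 \log \frac{T}{S} \geq -\theta^{2}, \qquad \text{where } S = (1+r_u)(1+r_v), \ T = \|u-v\| + \sqrt{r_u^{2} r_v^{2} + 2 r_u r_v \cos\theta + 1}.
\]
This step turns the geometric assertion into the problem of showing $T/S \geq 1 - O(\theta^{2})$.

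Next I would exploit a parallel algebraic structure between the two summands of $T$. The law of cosines gives $\|u-v\|^{2} = (r_u + r_v)^{2} - 2 r_u r_v(1 - \cos\theta)$, and a direct expansion shows $r_u^{2} r_v^{2} + 2 r_u r_v \cos\theta + 1 = (1 + r_u r_v)^{2} - 2 r_u r_v(1 - \cos\theta)$ as well. Crucially, $S = (r_u + r_v) + (1 + r_u r_v)$, so the two idealized values of the square roots add up to exactly $S$. Combining Lemma~\ref{lem:Taylor}(a), $1 - \cos\theta \leq \theta^{2}/2$, with Lemma~\ref{lem:Taylor}(b) applied in the form $\sqrt{x^{2} - \epsilon} \geq x - \epsilon/x$ (valid when $0 \leq \epsilon \leq x^{2}$), I would derive
\[
S - T \leq \theta^{2} r_u r_v \left( \frac{1}{r_u + r_v} + \frac{1}{1 + r_u r_v} \right);
\]
then two applications of AM-GM ($r_u + r_v \geq 2\sqrt{r_u r_v}$ and $1 + r_u r_v \geq 2\sqrt{r_u r_v}$) bound the right-hand side by $\theta^{2} S/4$, so that $T/S \geq 1 - \theta^{2}/4$.

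Finally I would close the logarithm gap: integrating Lemma~\ref{lem:Taylor}(c) from $0$ to $\theta^{2}/4 \leq 1/4$ yields $\log(1 - \theta^{2}/4) \geq -\theta^{2}/4 - \theta^{4}/16$, and multiplying by $2$ and using $\theta^{4} \leq \theta^{2}$ (since $\theta < 1$) delivers $2 \log(T/S) \geq -\theta^{2}$, as required. The main obstacle will be spotting the matching decomposition in the middle step: the two quantities under the square root signs in $T$ look unrelated at first, and it takes the right rewriting to see that they share a common template $\sqrt{x^{2} - 2 r_u r_v(1 - \cos\theta)}$ whose leading terms $x$ sum to $S$. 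That parallel, together with the AM-GM cancellation of the $r_u r_v$ factor, is what makes the estimate tight enough for the logarithm step to work.
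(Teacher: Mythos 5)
Your proposal is correct and follows essentially the same route as the paper's proof: both reduce via \eqref{eq:defdH}--\eqref{eq:defdH2} to showing $2\log(S/T)\leq\theta^2$ and then lower-bound the two square roots in $T$ using $\cos\theta\geq 1-\theta^2/2$, the bound $\sqrt{1-x}\geq 1-x$, and the same AM--GM facts (your $\sqrt{x^2-\epsilon}\geq x-\epsilon/x$ plus AM--GM is exactly the paper's factorization $x\sqrt{1-r_ur_v\theta^2/x^2}$ with $r_ur_v/x^2\leq 1/4$), yielding $T\geq S(1-\theta^2/4)$. The only cosmetic difference is the last step, where you integrate Lemma~\ref{lem:Taylor}(c) instead of applying (c) and (d) directly, which gives the same $\theta^2$ bound.
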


\begin{proof}
Let $r_1=\|u\|$ and $r_2=\|v\|$. We define 
\[f(r_1, r_2, \theta) = d_H (u,O) + d_H (O,v) - d_H (u,v).\]
Then by Equations~\eqref{eq:defdH} and~\eqref{eq:defdH2}, we have
\begin{align*}
f(r_1, r_2, \theta) & =  \log \left(\frac{1+r_1}{1-r_1}\right) + 
	 \log \left(\frac{1+r_2}{1-r_2}\right) \\
	& \quad - 2 \log \left(\frac{
				\sqrt{(r_1+r_2\cos \theta)^2+r_2^2 \sin^2 \theta}+
				\sqrt{r_1^2r_2^2+2r_1r_2\cos \theta+1}
				}{
				\sqrt{(1-r_1^2)(1-r_2^2)}
				}
	\right) 
	\\
	& = 2\log \left ( \frac{(1+r_1) (1+r_2)}{ \sqrt{r_1^{2} + r_2^{2} +2 r_1 r_2 \cos \theta} + \sqrt{r_1^{2} r_2^{2} + 2r_1 r_2 \cos \theta + 1}} \right )
\end{align*}

We first give lower bounds for the two parts of the denominator.
\begin{align*}
\sqrt{r_1^{2} + r_2^{2} +2 r_1 r_2 \cos \theta} 
	&\geq \sqrt{r_1^{2} + r_2^{2} +2 r_1 r_2 \left ( 1 - \frac{\theta^{2}}{2} \right )} 
	 & \text{by Lemma~\ref{lem:Taylor}a}\\
		&= \sqrt{(r_1+r_2)^{2} - r_1 r_2 \theta^{2}} \\
		&= (r_1+r_2) \sqrt{1 - \frac{r_1 r_2}{(r_1 + r_2)^{2}}\theta^{2}} \\
		& \geq (r_1+r_2) \sqrt{1 - \frac 1 4 \theta^{2}} &  \text{because $r_1^2+r_2^2 \geq 2r_1r_2$} \\
		&\geq (r_1 + r_2) \left ( 1 - \frac 1 4 \theta^{2} \right )
		& \text{by Lemma~\ref{lem:Taylor}b}
\end{align*}
\begin{align*}
\sqrt{r_1^{2} r_2^{2} + 2r_1 r_2 \cos \theta + 1} &\geq \sqrt{r_1^{2} r_2^{2} + 2r_1 r_2 \left ( 1 - \frac{\theta^{2}}{2} \right ) + 1} \\
		&= \sqrt{(r_1 r_2 + 1)^{2} - r_1 r_2 \theta^{2}} \\
		&= (r_1 r_2 + 1) \sqrt{1 - \frac{r_1 r_2}{(r_1 r_2 + 1)^{2}}\theta^{2}} \\
		& \geq (r_1 r_2 + 1) \sqrt{1 - \frac{1}{4}\theta^{2}}  & \text{because $r_1^2 r_2^2 +1 \geq 2r_1r_2$} \\
		&\geq (r_1 r_2 + 1) \left ( 1 - \frac{1}{4} \theta^{2} \right )
		& \text{by Lemma~\ref{lem:Taylor}b
		}
\end{align*}
It follows that 
\begin{align*}
f(r_1,r_2,\theta) & = 2\log
\left ( \frac{(1+r_1) (1+r_2)}
{\sqrt{r_1^{2} + r_2^{2} +2 r_1 r_2 \cos \theta} + \sqrt{r_1^{2} r_2^{2} + 2r_1 r_2 \cos \theta + 1} } \right ) \\
& \leq 2 \log \left ( \frac{(1+r_1) (1+r_2)}
{(r_1 r_2 + r_1 + r_2 + 1)\left(1-\frac 1 4 \theta^{2}\right)} \right ) \\
& = 2 \log\left(\frac{1}{1-\frac 1 4 \theta^2} \right) \\
& \leq 2\log\left(1+\frac 1 2 \theta^2\right) & \text{by Lemma~\ref{lem:Taylor}c} \\
& \leq \theta^2. &  \text{by Lemma~\ref{lem:Taylor}d}
\end{align*}
\end{proof}

When a geodesic connecting two points at infinity goes close to 
the origin $O$, the lemma below gives bounds on the angle they form about $O$. 
(See \figurename~\ref{fig:anglegeodesic}.)

\begin{figure}
	\centering
	\includegraphics{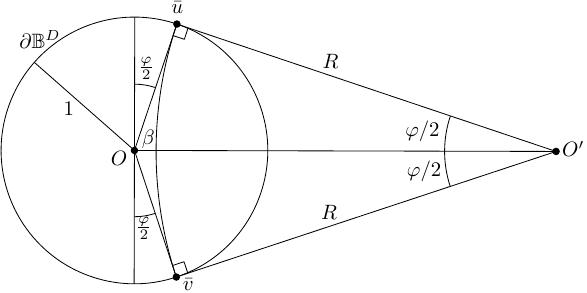}
	\caption{Proof of Lemma~\ref{lem:anglegeodesic}. \label{fig:anglegeodesic}}
\end{figure}
\begin{lemma}\label{lem:anglegeodesic}
	Let $\bar u,\bar v$ be two points on $\partial \poincare^D$ 
	such that $[\bar u,\bar v]$ is at Euclidean distance $\beta$
	from $O$. Let $\varphi=\pi-\angle \bar uO \bar v$. Then we have
		$\pi \beta \leq \varphi \leq 4 \beta.$
\end{lemma}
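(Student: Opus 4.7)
}

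The plan is to reduce the problem to a planar trigonometric identity for $\varphi$ in terms of $\beta$, and then bound it with two elementary inequalities. Since the three points $O$, $\bar u$, $\bar v$ are coplanar, the geodesic $[\bar u, \bar v]$ lies in the 2-plane $\Pi$ through them, and $\Pi \cap \poincare^D$ is the Poincar\'e disk. So from the outset I would work inside $\Pi$ and recall that $[\bar u,\bar v]$ is the arc of a Euclidean circle $C$ of some center $c$ and radius $\rho$ that is orthogonal to the unit circle $\partial\poincare^D \cap \Pi$.

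Next I would record the two standard consequences of this orthogonality. First, $\|c\|^2 = 1+\rho^2$, and the minimum Euclidean distance from $O$ to the arc equals $\|c\|-\rho$, so $\beta = \|c\|-\rho$. Solving these two relations simultaneously gives the closed form $\|c\| = (1+\beta^2)/(2\beta)$. Second, since the circles $\partial \poincare^D \cap \Pi$ and $C$ meet orthogonally at $\bar u$, the triangle $O\bar u c$ has a right angle at $\bar u$ with legs $\|O\bar u\|=1$ and $\|\bar u c\|=\rho$ and hypotenuse $\|c\|$; hence $\cos(\angle \bar u Oc)=1/\|c\|$. By the symmetry of $\bar u$ and $\bar v$ about the line $Oc$, we get $\angle \bar u O\bar v = 2\angle \bar u Oc$, so
\[
\sin(\varphi/2) \;=\; \sin\!\bigl((\pi-\angle \bar u O\bar v)/2\bigr) \;=\; \cos(\angle \bar u Oc) \;=\; \frac{1}{\|c\|} \;=\; \frac{2\beta}{1+\beta^2}.
\]
Using the classical identity $\sin(2\arctan \beta) = 2\beta/(1+\beta^2)$, and the fact that $\varphi/2 \in [0,\pi/2]$ whenever $\beta\in[0,1]$, I conclude the key formula $\varphi = 4\arctan \beta$. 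Note that $\beta \leq 1$ automatically since the geodesic lies inside $\poincare^D$.

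From there the two inequalities are routine. The upper bound $\varphi\leq 4\beta$ is immediate from $\arctan\beta\leq\beta$ for $\beta\geq 0$. For the lower bound $\varphi\geq \pi\beta$, I would invoke concavity of $\arctan$ on $[0,\infty)$ (since $\arctan''(\beta) = -2\beta/(1+\beta^2)^2 \leq 0$): a concave function lies above any of its chords, so on $[0,1]$ we have $\arctan\beta \geq \beta \cdot \arctan(1) = (\pi/4)\beta$, and multiplying by $4$ gives $\varphi\geq \pi\beta$.

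The one mildly delicate point is making sure the orientation is right when passing from $\angle \bar u O\bar v$ to $\varphi/2$ (i.e., that $\sin(\varphi/2)=\cos(\angle \bar u Oc)$ rather than $\sin(\angle \bar u Oc)$), which is what justifies writing $\varphi = 4\arctan\beta$ rather than some complementary expression; apart from that, everything reduces to the two one-variable inequalities above.
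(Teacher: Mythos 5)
Your proposal is correct and arrives at the paper's central identity $\beta=\tan(\varphi/4)$ (written by you as $\varphi=4\arctan\beta$), just via a slightly different trigonometric route: you invoke orthogonality of the geodesic circle with $\partial\poincare^D$ to get $\|c\|^2=1+\rho^2$ and then read off $\cos(\angle\bar uOc)=1/\|c\|$ from the right triangle $O\bar u c$, whereas the paper applies the law of sines in the triangle $OO'\bar v$ and simplifies $(1-\cos(\varphi/2))/\sin(\varphi/2)$ directly. Both approaches finish with the same elementary tangent-versus-linear estimates on $[0,\pi/4]$, so the argument is essentially the one in the paper.
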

\begin{proof}
	The geodesic $[\bar u,\bar v]$ is an arc of a circle, let $O'$ be its center and $R$ be its Euclidean radius.
	(See \figurename~\ref{fig:anglegeodesic}.) We apply the law of sines to the triangle
	$OO'\bar v$:
	\[
		\frac{R+\beta}{\sin(\pi/2)}=\frac{1}{\sin(\varphi/2)}=\frac{R}{\cos(\varphi/2)}
	\]
	It follows that $R=1/\tan(\varphi/2)$ and thus, using the change of variable $t=\tan(\varphi/4)$:
	\[
		\beta=\frac{1}{\sin(\varphi/2)}-\frac{1}{\tan(\varphi/2)}=\frac{1-\cos(\varphi/2)}{\sin(\varphi/2)}
		= \frac{2t^2}{2t}=t=\tan(\varphi/4).
	\]
	By concavity of $\tan(\cdot)$, we have $x \leq \tan(x) \leq (4/\pi)x$ for all $0 \leq x \leq \pi/4$. 
	We therefore have $\pi \beta \leq \varphi \leq 4 \beta.$
\end{proof}

\begin{figure}
	\centering
	\includegraphics{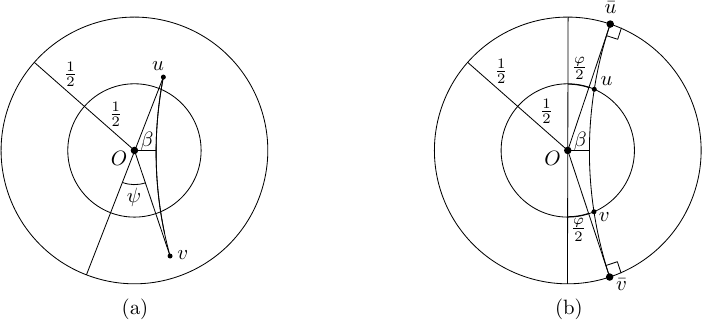}
	\caption{(a) Lemma~\ref{lem:anglegeodesic2} statement. 
		(b) Lemma~\ref{lem:anglegeodesic2} proof.\label{fig:anglegeodesic2}}
\end{figure}

The lemma below is analogous to Lemma~\ref{lem:anglegeodesic}, but
for a geodesic of finite length. (See \figurename~\ref{fig:anglegeodesic2}a.)
\begin{lemma}\label{lem:anglegeodesic2}
	Let $u,v \in \poincare^D$ be two points such that $d_H(O,u) \geq \log 3$ 
	and $d_H(O,v) \geq \log 3$. 
	Let $\psi=\pi-\angle pOq$.
	If $[u,v]$ is at hyperbolic distance $\beta < \log 3$ from $O$, then 
	$\psi < 4 \pi \beta$.
\end{lemma}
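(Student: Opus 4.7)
The plan is to drop a hyperbolic perpendicular from $O$ to the geodesic through $u$ and $v$, express $\psi$ in closed form via right-angle hyperbolic trigonometry, and finish with a short analytic bound that uses the hypothesis $d_H(O,u), d_H(O,v) \geq \log 3$.

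First I would let $O^*$ be the foot of the hyperbolic perpendicular from $O$ onto $[u,v]$, so $d_H(O, O^*) = \beta$. Since $\beta < \log 3 \leq \min(d_H(O,u), d_H(O,v))$, the foot $O^*$ lies strictly between $u$ and $v$ on the geodesic, and $O, O^*, u, v$ all lie in a common totally geodesic $2$-plane of $\poincare^D$, so the angles at $O$ may be treated as planar Euclidean angles. Setting $s_u = d_H(O^*, u)$ and $s_v = d_H(O^*, v)$, the hyperbolic Pythagorean theorem in the right triangle $OO^*u$ (right angle at $O^*$) gives $\cosh d_H(O,u) = \cosh\beta \cdot \cosh s_u$, and the standard right-triangle angle identity gives $\tan \angle O^*Ou = \tanh s_u / \sinh \beta$, hence
\[
\frac{\pi}{2} - \angle O^*Ou = \arctan\frac{\sinh\beta}{\tanh s_u},
\]
with the analogous statement for $v$.

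Because $O^*$ is interior to $[u,v]$, the ray $OO^*$ separates $Ou$ and $Ov$ in the $2$-plane, so $\angle uOv = \angle uOO^* + \angle O^*Ov$; subtracting from $\pi$ yields
\[
\psi \;=\; \arctan\frac{\sinh\beta}{\tanh s_u} \;+\; \arctan\frac{\sinh\beta}{\tanh s_v}.
\]
Using $d_H(O,u) \geq \log 3$ and the Pythagorean identity, $\cosh s_u \geq \cosh(\log 3)/\cosh\beta = (5/3)/\cosh\beta$, and since $\beta < \log 3$ gives $\cosh\beta < 5/3$, rearranging yields $\tanh s_u \geq \sqrt{16 - 9\sinh^2\beta}/5$, with the same estimate for $s_v$. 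Substituting gives the uniform bound
\[
\psi \;\leq\; 2\arctan\frac{5\sinh\beta}{\sqrt{16 - 9\sinh^2\beta}}.
\]

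Finally, to get $\psi < 4\pi\beta$, I would finish by a short case split. For $\beta \geq 1/4$ the trivial bound $\psi \leq \pi \leq 4\pi\beta$ already works, and is strict whenever $\beta > 0$ because then $[u,v]$ misses $O$, forcing $\angle uOv > 0$. For $\beta < 1/4$, the mean value theorem applied to $\sinh$ on $[0,\log 3]$ gives $\sinh\beta \leq (5/3)\beta$, and combined with $\sqrt{16 - 9\sinh^2\beta} > \sqrt{15}$ and $\arctan x \leq x$ we get each arctangent bounded by $25\beta/(3\sqrt{15}) < 2.2\beta$, so $\psi < 4.4\beta < 4\pi\beta$. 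The main technical obstacle is a careful justification of the planar angle decomposition $\angle uOv = \angle uOO^* + \angle O^*Ov$ and of the standard hyperbolic right-triangle identities in the $2$-plane; once those are in place, the analytic estimate has comfortable slack from the generous factor $4\pi$.
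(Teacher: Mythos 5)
Your proof is correct, and it takes a genuinely different route from the paper's. The paper extends $[u,v]$ to ideal endpoints $\bar u,\bar v\in\partial\poincare^D$ and argues in the Euclidean geometry of the model: it converts $d_H(O,u),d_H(O,v)\geq\log 3$ into $\|u\|,\|v\|\geq 1/2$, uses an extremal claim that the angle is smallest when $\|u\|=\|v\|=1/2$, compares $\sin(\psi/2)$ with $\sin(\varphi/2)$, and then invokes Lemma~\ref{lem:anglegeodesic} (the law-of-sines computation for the supporting circle) to bound $\varphi$ by the Euclidean distance to the chord. You instead work intrinsically: drop the perpendicular foot $O^*$ (correctly placed in the interior of $[u,v]$ because otherwise the distance to the segment would be at least $\log 3$), and use the hyperbolic right-triangle identities $\cosh d_H(O,u)=\cosh\beta\cosh s_u$ and $\tan\angle O^*Ou=\tanh s_u/\sinh\beta$ to get the exact formula $\psi=\arctan(\sinh\beta/\tanh s_u)+\arctan(\sinh\beta/\tanh s_v)$, followed by elementary estimates. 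This buys independence from Lemma~\ref{lem:anglegeodesic} and from the paper's unproved extremal step, and it gives a sharper constant (roughly $\psi<4.4\beta$ for small $\beta$, with an exact closed-form bound in general); the cost is that you must import the standard right-triangle identities and the neutral-geometry angle-addition fact $\angle uOv=\angle uOO^*+\angle O^*Ov$ (crossbar), which you rightly flag as the points to write out. One small repair: in the $\beta\geq 1/4$ case, "$[u,v]$ misses $O$" does not by itself force $\angle uOv>0$; argue instead that $\angle uOv=0$ would put $u,v$ on a common ray through $O$, making $[u,v]$ radial and at hyperbolic distance at least $\log 3>\beta$ from $O$, a contradiction --- or simply observe that each arctangent in your exact formula is strictly less than $\pi/2$, so $\psi<\pi$ always.
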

\begin{proof}
	The conditions $d_H(O,u) \geq \log 3$ and $d_H(O,v) \geq \log 3$ are
	equivalent to $\|u\| \geq 1/2$ and $\|v\| \geq 1/2$. Similarly, $[u,v]$ is
	at Euclidean distance at most $1/2$ from $O$.

	We extend $[u,v]$ to infinity, such that it is a portion of a geodesic $[\bar u,\bar v]$ with
	$\bar u,\bar v \in \partial \poincare^D$. We change the coordinate system so that $O,\bar u$ and  $\bar v$
	are in the plane $Ox_1x_2$ and $x_1(\bar u)=x_1(\bar v) \geq 0$. 
	When $\bar u$ and $\bar v$ are fixed, the smallest possible angle $\angle uOv$ is 
	achieved when $\|Ou\|=\|Ov\|=1/2$.
	(See \figurename~\ref{fig:anglegeodesic2}b.) So we may assume that  $\|Ou\|=\|Ov\|=1/2$.
	
	We have $x_1(\bar u)=x_1(\bar v)=\sin(\varphi/2)$, using the same notations as in Lemma~\ref{lem:anglegeodesic}.
	It follows that $x_1(u)=x_1(v)<  \sin(\varphi/2)$. We also have $x_1(u)=x_1(v)=(1/2)\sin(\psi/2)$,
	and thus $\sin(\psi/2) < 2 \sin(\varphi/2)$. As $(2/\pi)x \leq \sin x \leq x$ for all $x \in [0,\pi/2]$,
	it implies that $\psi/\pi < \varphi$. The result follows from Lemma~\ref{lem:anglegeodesic}.
\end{proof}

\section{Gromov Hyperbolic Spaces}\label{sec:Gromov}

Let $(X,d)$ be a metric space. The \emph{Gromov product} of two points $u,v \in X$
with respect to $t \in X$ is defined as
\begin{equation*}
(u | v)_{t} = \frac{1}{2} \left( d(u,t) + d(v,t) - d(u,v) \right).
\end{equation*}
This metric space $(X,d)$ is $\delta$-{\it hyperbolic} if for any 4 points $t,u,v,w \in X$, we have
\begin{equation}\label{eq:gromov}
	(u|w)_t \geq \min\{(u|v)_t, (v|w)_t\}- \delta.
\end{equation}

An example of $\delta$-hyperbolic space is $\poincare^D$~\cite[Proposition 4.3]{Coornaert}:

\begin{proposition}\label{prop:HDhyperbolicity}
	The space $(\poincare^D,d_H)$ is $\delta$-hyperbolic, with $\delta=\log 3$.
\end{proposition}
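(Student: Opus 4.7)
The plan is to verify the 4-point inequality~\eqref{eq:gromov} directly. Since it depends only on the six pairwise hyperbolic distances among $t,u,v,w$, it is an isometry invariant of the 4-tuple. Using Proposition~\ref{prop:translation} I apply $\mu_t$ to normalize $t=O$, after which the three distances emanating from $t$ take the simple radial form $d_H(O,x)=\log\bigl(\frac{1+\|x\|}{1-\|x\|}\bigr)$ of Equation~\eqref{eq:defdH2}, while the remaining three follow from Equation~\eqref{eq:defdH}.

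Next I would reduce the ambient dimension. Any four points of $\poincare^D$ lie on a 3-dimensional totally geodesic subspace, which is itself isometric to $\poincare^3$; this is cleanest in the hyperboloid model, where totally geodesic subspaces are exactly the intersections with linear subspaces of the ambient Minkowski space. So without loss of generality $D=3$, and after the normalization $t=O$ the configuration is parameterized by the three Euclidean radii $r_u,r_v,r_w$ and the three pairwise angles at $O$.

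With these six parameters, the three sums
\[
S_1 = d_H(t,u)+d_H(v,w), \qquad S_2 = d_H(t,v)+d_H(u,w), \qquad S_3 = d_H(t,w)+d_H(u,v)
\]
become explicit functions via Equations~\eqref{eq:defdH} and~\eqref{eq:defdH2}, and inequality~\eqref{eq:gromov} is equivalent to the assertion that the two largest of $S_1,S_2,S_3$ differ by at most $2\log 3$.

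The main obstacle is to identify the extremal configuration and pin down the optimal constant. A monotonicity argument should show that pushing each of $r_u,r_v,r_w$ toward $1$ can only enlarge the gap between the two largest sums, so the supremum is attained in the boundary limit where all four points lie on $\partial\poincare^3$. At that ideal limit the three sums reduce to $-2\log$ of the three cross-ratios of four points on the boundary sphere, and the desired inequality becomes the classical multiplicative 4-point condition for cross-ratios with constant $3$, from which the value $\delta=\log 3$ emerges as tight (realized by an ideal ``square''). Since this computation is standard, I would invoke Proposition~4.3 of~\cite{Coornaert} rather than reproducing the algebra.
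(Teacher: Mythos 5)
The paper does not prove Proposition~\ref{prop:HDhyperbolicity} at all; it is stated as an imported fact with a citation to \cite[Proposition~4.3]{Coornaert}. Since your last sentence also defers to that same reference, in the end your argument and the paper's treatment coincide: both are a citation.

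The material you sketch before the citation, however, is not a self-contained alternative proof, and it has a genuine gap. The key step, ``a monotonicity argument should show that pushing each of $r_u,r_v,r_w$ toward $1$ can only enlarge the gap between the two largest sums,'' is asserted without justification, and it is precisely this claim that would let you reduce to the ideal configuration. It is not obvious: increasing one radius, say $r_u$, increases $d_H(t,u)$, $d_H(u,v)$ and $d_H(u,w)$ simultaneously, and these appear in different $S_i$'s with no clear ordering of the effects; so the reduction to the boundary case is unsupported as written. A secondary issue is that the boundary limit needs more care than ``the three sums reduce to $-2\log$ of cross-ratios'': the individual quantities $S_1,S_2,S_3$ diverge as the four points approach $\partial\poincare^3$, and only the pairwise differences $S_i-S_j$ converge (these are what become cross-ratio logarithms). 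Finally, the parenthetical that $\delta=\log 3$ is ``tight, realized by an ideal square'' is a stronger claim than the proposition makes, which only asserts that $\log 3$ suffices. None of this affects the paper, which treats the statement as a black box, but if you want the sketch to stand on its own rather than collapse to the same citation, the monotonicity lemma is the piece that must actually be proved.
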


Intuitively, Gromov-hyperbolic spaces behave like tree-metrics. More precisely,
it has been shown that any $n$-points $\delta$-hyperbolic space can be embedded
into a tree metric with $O(\delta \log n)$ additive distortion~\cite{Gromov87}.
Also, a geodesic space is $\delta$-hyperbolic when all its triangles
are thin, as explained in Section~\ref{sec:geogromov}.

The Lemma below was proved by Chepoi et al.~\cite{chepoi2008diameters}. We include a shorter proof.

\begin{lemma}[\cite{chepoi2008diameters}]\label{lem:Chepoi}
	Let $s,t,u,v$ be 4 points in a $\delta$-hyperbolic space $(X,d)$. If \\ $d(s,t) \geq \max\{d(s,u),d(s,v)\}$,
	then $d(u,v) \leq \max\{d(t,u),d(t,v)\}+2\delta$.
\end{lemma}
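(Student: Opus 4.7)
The plan is to apply the four-point $\delta$-hyperbolicity inequality~\eqref{eq:gromov} directly, choosing $s$ as the base point and the triple $(u,t,v)$ as the three arguments. This is the natural choice because the hypothesis $d(s,t) \geq \max\{d(s,u),d(s,v)\}$ singles out $s$ as a distinguished vertex, and putting it at the base of the Gromov product is what lets the hypothesis interact with the conclusion.

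Concretely, I would write
\begin{equation*}
(u \mid v)_s \;\geq\; \min\bigl\{(u \mid t)_s,\, (t \mid v)_s\bigr\} - \delta,
\end{equation*}
expand each Gromov product by its definition, multiply through by $2$, and then split into two cases according to which of the two terms achieves the minimum. In Case~1, where $(u \mid t)_s$ is smallest, the inequality simplifies (after cancelling $d(u,s)$ on both sides) to $d(u,v) \leq d(u,t) + 2\delta + \bigl(d(v,s) - d(t,s)\bigr)$, and the hypothesis $d(s,t) \geq d(s,v)$ kills the last parenthesis, yielding $d(u,v) \leq d(t,u) + 2\delta$. In Case~2, where $(t \mid v)_s$ is smallest, the symmetric manipulation and the hypothesis $d(s,t) \geq d(s,u)$ give $d(u,v) \leq d(t,v) + 2\delta$. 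In either case the bound is at most $\max\{d(t,u),d(t,v)\} + 2\delta$, as required.

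There is not really a main obstacle here, since the whole proof is essentially one invocation of \eqref{eq:gromov} followed by bookkeeping. The only thing to be a bit careful about is picking the correct permutation of $(u,t,v)$ inside the Gromov product so that both potential minimizers produce terms where the hypothesis on $d(s,t)$ can be used to discard a nonpositive difference; the choice above, with $t$ sandwiched between $u$ and $v$, is the one that makes this work symmetrically in both cases.
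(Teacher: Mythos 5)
Your proof is correct and uses essentially the same argument as the paper: apply the four-point condition~\eqref{eq:gromov} with base point $s$ and the triple $(u,t,v)$, then cancel using the hypothesis on $d(s,t)$. The only cosmetic difference is that the paper assumes WLOG that $(u\mid t)_s \geq (v\mid t)_s$ (invoking the symmetry of the statement in $u$ and $v$) and writes out only one branch, whereas you spell out both cases explicitly.
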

\begin{proof}
	Without loss of generality, suppose that $(u|t)_s \geq (v|t)_s$. It follows from
	the definition of Gromov hyperbolicity~\eqref{eq:gromov} that
	$
		(u|v)_s 
			 \geq (v|t)_s - \delta.
	$
	It can be rewritten
	$
		d(s,u)+d(s,v)-d(u,v) \geq d(s,v)+d(s,t)-d(t,v) -2 \delta.
	$
	As $d(s,t) \geq d(s,u)$, it implies that $d(u,v) \leq d(t,v)+2\delta$.
\end{proof}

Let $a^*,b^*$ be a diametral pair of $P \subset X$,  $\hat a=f_P(p_1)$ and $\hat b=f_P(\hat a)$.
As was observed by Chepoi et al.~\cite{chepoi2008diameters}, by applying Lemma~\ref{lem:Chepoi} 
with $s=\hat a$, $t=\hat b$, $u=a^*$ and $v=b^*$,
$\hat a$ and  $\hat b$ form an approximate diametral pair in the following sense:
\begin{corollary}\label{cor:Chepoi}
	If $(X,d)$ is $\delta$-hyperbolic, then 
	$d(\hat a,\hat b) \leq \diam(P) \leq d(\hat a,\hat b) + 2\delta$.
\end{corollary}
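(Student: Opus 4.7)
The statement has two halves. The lower bound $d(\hat a,\hat b) \leq \diam(P)$ is immediate, since $\hat a,\hat b \in P$ and the diameter is the maximum pairwise distance in $P$. The real content is the upper bound, and my plan is to derive it from a single application of Lemma~\ref{lem:Chepoi}, using the two successive farthest-point searches that produced $\hat a$ and $\hat b$.

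For the upper bound, I would apply Lemma~\ref{lem:Chepoi} with $s = p_1$, $t = \hat a$, $u = a^*$ and $v = b^*$. The hypothesis $d(p_1,\hat a) \geq \max\{d(p_1,a^*),\, d(p_1,b^*)\}$ is immediate from the definition $\hat a = f_P(p_1)$, since $a^*,b^* \in P$. The lemma then yields
\[
d(a^*,b^*) \;\leq\; \max\{d(\hat a,a^*),\, d(\hat a,b^*)\} + 2\delta.
\]
To finish, I invoke the defining property of $\hat b = f_P(\hat a)$: because $a^*,b^* \in P$, both $d(\hat a,a^*)$ and $d(\hat a,b^*)$ are at most $d(\hat a,\hat b)$. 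Chaining the two bounds gives $\diam(P) = d(a^*,b^*) \leq d(\hat a,\hat b) + 2\delta$, which is the desired inequality.

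The one place requiring a bit of care is the choice of base point in Lemma~\ref{lem:Chepoi}. The apparently natural pick $s=\hat a$, $t=\hat b$ also satisfies the hypothesis, but its conclusion would bound $d(a^*,b^*)$ in terms of distances \emph{from} $\hat b$, which are a priori only bounded by $\diam(P)$ itself, so that substitution degenerates to a tautology. Taking $s = p_1$ instead lets the two farthest-point searches play complementary roles — the first one ($\hat a = f_P(p_1)$) supplies the hypothesis of Lemma~\ref{lem:Chepoi}, and the second one ($\hat b = f_P(\hat a)$) simplifies its conclusion — so the $2\delta$ additive guarantee drops out in one step with no further work.
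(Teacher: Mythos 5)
Your proof is correct, and it in fact uses the \emph{right} instantiation of Lemma~\ref{lem:Chepoi}, whereas the paper's stated choice appears to be a slip. The paper says to apply Lemma~\ref{lem:Chepoi} with $s=\hat a$, $t=\hat b$, $u=a^*$, $v=b^*$; as you observe, that hypothesis does hold (since $\hat b = f_P(\hat a)$), but the resulting conclusion $d(a^*,b^*)\leq\max\{d(\hat b,a^*),d(\hat b,b^*)\}+2\delta$ only bounds $\diam(P)$ by quantities that are themselves a priori bounded only by $\diam(P)$, because nothing ensures $\hat a = f_P(\hat b)$. Your choice $s=p_1$, $t=\hat a$, $u=a^*$, $v=b^*$ is the one that makes the argument close: the hypothesis $d(p_1,\hat a)\geq\max\{d(p_1,a^*),d(p_1,b^*)\}$ comes from $\hat a=f_P(p_1)$, the conclusion leaves distances measured from $\hat a$, and the second sweep $\hat b=f_P(\hat a)$ caps those by $d(\hat a,\hat b)$. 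This is the standard two-sweep argument of Chepoi et al., and structurally it is the same proof the paper intends (one application of Lemma~\ref{lem:Chepoi} followed by the extremality of $\hat b$); you have simply supplied the correct variable assignment and usefully explained why the alternative degenerates.
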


\subsection{Geodesic $\delta$-Hyperbolic Spaces} \label{sec:geogromov}

\begin{figure}
	\centering
	\includegraphics{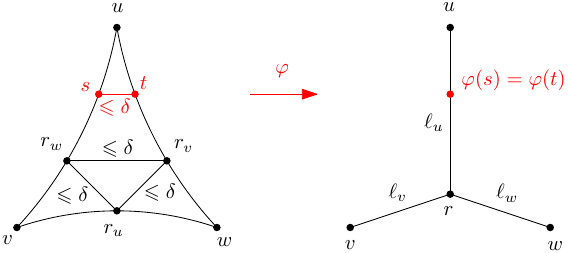}
	\caption{A $\delta$-thin geodesic triangle $[u,v,w]$ (left), 
		and the corresponding tripod (right).\label{fig:slim}}
\end{figure}

Let $u,v,w$ be three points in a geodesic $\delta$-hyperbolic space $(X,d)$.
Let $T$ be a tree, called {\it tripod}, with leafs $u,v,w$, connected to a root node $r$ such that
the length of the three edges are given by (see \figurename~\ref{fig:slim})
\begin{equation*}
d(r,u)=\ell_{u} = (v | w)_{u}, \quad d(r,v)=\ell_{v} = (u | w)_{v}, \text{ and } d(r,w)=\ell_{w} = (u | v)_{w}.
\end{equation*}
It can be easily checked that the distances along this tree coincide with the metric $d$,
for instance $d(u,v)=\ell_u+\ell_v$.

There is a unique map $\varphi:[u,v,w] \to T$ that sends $u,v$ and $w$ to the corresponding leafs
of $T$, and which is an isometry when restricted to each side $[u,v]$, $[v,w]$ and $[u,w]$.
Intuitively, the geodesic triangle is folded onto the tripod, so that each point in the interior of
an edge of the tripod corresponds to two points of the triangle, taken from the two adjacent sides.

The triangle $[u,v,w]$ is $\delta'$-\emph{thin} if for any two points $s,t \in [u,v,w]$, 
$\varphi(s) = \varphi(t)$ implies $d(s,t) \leq \delta'$. It has been shown that in a Gromov-hyperbolic
geodesic space, all triangles are thin~\cite[Proposition 6.3C]{Gromov87} :

\begin{proposition}\label{prop:thintriangles}
	If $(X,d)$ is a geodesic $\delta$-hyperbolic metric space, then all the geodesic
	triangles in $(X,d)$ are $2\delta$-thin.
\end{proposition}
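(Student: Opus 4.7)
The plan is to exploit the piecewise-isometry structure of the tripod map $\varphi$ to reduce to a canonical configuration, and then bound $d(s,t)$ via Gromov's 4-point condition with a well-chosen base point.

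First I would show that any two distinct points in the same fiber of $\varphi$ lie on sides of the triangle sharing a common vertex. Indeed, $\varphi$ is an isometry on each side, so two distinct preimages must come from two distinct sides, and each edge of the tripod appears in the image of exactly the two sides incident to the corresponding leaf. After relabeling the vertices, I may assume $s \in [u,v]$ and $t \in [u,w]$ with $d(u,s) = d(u,t) = a$, and moreover $a \leq \ell_u = (v|w)_u$, since $\varphi(s)$ lies on the $u$-edge of the tripod.

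Next, I reduce the thinness bound to a Gromov product estimate. Since $d(u,s) = d(u,t) = a$,
\[
(s|t)_u \;=\; a - \frac{1}{2}\, d(s,t),
\]
so $d(s,t) \leq 2\delta$ is equivalent to $(s|t)_u \geq a - \delta$. The key step is to apply Gromov's 4-point inequality with base $s$ to the points $\{u,v,t\}$: because $s \in [u,v]$ forces $(u|v)_s = 0$, the condition yields $\min\{(u|t)_s, (t|v)_s\} \leq \delta$. In the first alternative, $(u|t)_s = \frac{1}{2} d(s,t) \leq \delta$ (using $d(u,s) = d(u,t)$), so $d(s,t) \leq 2\delta$. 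A symmetric application of the 4-point inequality with base $t$, using $(u|w)_t = 0$, dispatches the subcase $(u|s)_t \leq \delta$ in the same way.

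The main obstacle is the remaining sub-subcase where $(t|v)_s \leq \delta$ and $(s|w)_t \leq \delta$ both hold. Geometrically, $s$ lies essentially on the geodesic $[v,t]$ and $t$ essentially on $[s,w]$, so both $s$ and $t$ are close to the opposite side $[v,w]$. I would translate these Gromov-product bounds into the concrete inequalities $d(t,v) \geq d(u,v) - a + d(s,t) - 2\delta$ and $d(s,w) \geq d(u,w) - a + d(s,t) - 2\delta$, and combine them with the tripod constraint $a \leq (v|w)_u$ from the first step via a further application of the 4-point inequality with base $u$ on $\{s,t,v,w\}$. This constraint is what distinguishes the sharp $2\delta$ bound from the weaker $4\delta$ that a blind chain of two Gromov inequalities would otherwise produce.
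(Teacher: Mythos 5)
The paper does not actually prove this proposition; it quotes it from Gromov (the cited Proposition 6.3C), and the main results only ever use the explicit constant for $\poincare^D$ from Proposition~\ref{prop:HDthin}. So your proposal has to stand entirely on its own, and it does not: the first three steps are fine (the reduction to $s\in[u,v]$, $t\in[u,w]$ with $d(u,s)=d(u,t)=a\le (v|w)_u$, and the two four-point applications with bases $s$ and $t$, which settle the cases $(u|t)_s\le\delta$ and $(u|s)_t\le\delta$), but the final case is a genuine gap, not a routine computation left to the reader. At that stage the only facts you propose to use are $d(t,v)\ge d(s,t)+d(u,v)-a-2\delta$, $d(s,w)\ge d(s,t)+d(u,w)-a-2\delta$, the constraint $a\le (v|w)_u$, and further four-point inequalities among the five labelled points $u,v,w,s,t$. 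No combination of these can yield $d(s,t)\le 2\delta$, because they are all simultaneously satisfiable with $d(s,t)=4\delta$. Concretely, take $\delta=1$ and the five-point metric
\[
d(u,v)=d(u,w)=10,\quad d(v,w)=15,\quad d(u,s)=d(u,t)=2,\quad d(s,v)=d(t,w)=8,\quad d(t,v)=d(s,w)=10,\quad d(s,t)=4 .
\]
All triangle inequalities hold; in every quadruple the two largest sums differ by at most $2=2\delta$, so the four-point condition holds with $\delta=1$ for every choice of base; $d(u,s)+d(s,v)=d(u,v)$ and $d(u,t)+d(t,w)=d(u,w)$, so $s,t$ are metrically on the sides at equal distance $a=2\le (v|w)_u=5/2$; and $(t|v)_s=(s|w)_t=1\le\delta$ while $(u|t)_s=(u|s)_t=2>\delta$, so this is exactly your remaining case. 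Yet $d(s,t)=4=4\delta$.

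The conclusion is that the endgame cannot be closed with the tools you name: any proof of the $2\delta$ bound must exploit the geodesic structure beyond the five labelled points (for instance auxiliary points on the sides, in particular on $[v,w]$), because the obstruction to the configuration above is precisely that it cannot be filled in to a geodesic $\delta$-hyperbolic space --- information your argument never invokes. What your machinery does prove is the weaker statement that triangles are $4\delta$-thin, via $(t|v)_u\ge\min\{(t|w)_u,(w|v)_u\}-\delta=a-\delta$ followed by $(s|t)_u\ge\min\{(s|v)_u,(v|t)_u\}-\delta\ge a-2\delta$; that is the constant given in several standard references and would in fact suffice for everything this paper does with thinness (which is parametrized by a separate $\delta'$ and instantiated for $\poincare^D$ through Proposition~\ref{prop:HDthin}), but it is not the $2\delta$ bound claimed in the proposition you set out to prove.
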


In the case of $\poincare^D$, the following is known~\cite[Corollary 4.2]{Coornaert}.
\begin{proposition}\label{prop:HDthin}
	Every triangle in $(\poincare^D,d_H)$ is $(\log 3)$-thin.
\end{proposition}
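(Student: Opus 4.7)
Since the proposition is attributed to Coornaert, the cleanest route is to invoke \cite[Corollary 4.2]{Coornaert} directly: combining Propositions~\ref{prop:HDhyperbolicity} and~\ref{prop:thintriangles} alone only gives that triangles in $\poincare^D$ are $(2\log 3)$-thin, so an argument specific to $\poincare^D$ is needed to save the factor of two. Below I sketch how one could establish the result from first principles, obtaining in fact a slightly sharper constant.

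First I would reduce to dimension $D=2$: any three points $u,v,w \in \poincare^D$ lie in a two-dimensional totally geodesic subspace isometric to $\poincare^2$, so the triangle $[u,v,w]$ and its tripod map $\varphi$ can be assumed to live in $\poincare^2$. By Proposition~\ref{prop:translation} I would then place $u$ at the origin $O$, so that the sides $[u,v]$ and $[u,w]$ are Euclidean radii meeting at some angle $\theta$. For two points $s \in [u,v]$ and $t \in [u,w]$ with $\varphi(s)=\varphi(t)$, both at hyperbolic distance $a \leq \ell_u$ from $u$, a short computation from~\eqref{eq:defdH1} yields
\[
 d_H(s,t) = \arcosh\bigl(1 + 2\sinh^2(a)\sin^2(\theta/2)\bigr).
\]

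The constraint $a \leq \ell_u = \tfrac{1}{2}(d_H(u,v)+d_H(u,w)-d_H(v,w))$, combined with the hyperbolic law of cosines applied to $[u,v,w]$, bounds $\sinh(a)\sin(\theta/2)$ in terms of $\theta$ and the two side lengths at $u$. A monotonicity argument shows that the supremum of $d_H(s,t)$ is attained in the ideal limit $d_H(u,v), d_H(u,w) \to \infty$, in which $\ell_u \to -\log\sin(\theta/2)$ and $2\sinh^2(a)\sin^2(\theta/2) \to \frac{1}{2}\cos^4(\theta/2) \leq \frac{1}{2}$. This gives $d_H(s,t) \leq \arcosh(3/2) < \log 3$. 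The symmetric case, where $s$ and $t$ are corresponding points on two sides meeting at a vertex other than $u$, is handled identically by translating that vertex to the origin instead.

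The main obstacle is establishing the monotonicity rigorously: because $\ell_u$ couples all three side lengths through the law of cosines, proving that pushing $v$ and $w$ to infinity while keeping the angle $\theta$ fixed genuinely increases $d_H(s,t)$ requires a careful derivative computation. This is precisely why the direct citation of \cite[Corollary 4.2]{Coornaert} is the preferable route in the paper.
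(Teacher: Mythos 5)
Your primary route --- citing \cite[Corollary 4.2]{Coornaert} --- is exactly what the paper does: Proposition~\ref{prop:HDthin} appears there as a known fact with that citation and no further argument, and your observation that Propositions~\ref{prop:HDhyperbolicity} and~\ref{prop:thintriangles} together only yield $(2\log 3)$-thinness correctly explains why the direct citation (or a space-specific argument) is needed. Your self-contained sketch is a genuinely different route, and its ingredients check out: any triple of points lies in a totally geodesic copy of $\poincare^2$; with the common vertex at $O$ the two sides are Euclidean radii; the identity $d_H(s,t)=\arcosh\bigl(1+2\sinh^2(a)\sin^2(\theta/2)\bigr)$ is just the hyperbolic law of cosines for two points at distance $a$ from $O$ spanning angle $\theta$; and the ideal limit $\ell_u\to-\log\sin(\theta/2)$ is correct. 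Moreover, the monotonicity you flag as the main obstacle is easier than you fear: writing $b=d_H(u,v)$, $c=d_H(u,w)$, $A=d_H(v,w)$ with $\theta$ fixed, the law of cosines gives $\sinh^2 A-(\sinh b\cosh c-\cosh b\sinh c\cos\theta)^2=\sinh^2 c\,\sin^2\theta\geq 0$, hence $\partial A/\partial b\leq 1$, so $\ell_u=(b+c-A)/2$ is nondecreasing in each of $b$ and $c$ and its supremum at fixed $\theta$ is the ideal value $-\log\sin(\theta/2)$; since $d_H(s,t)$ increases in $a\leq\ell_u$, this closes the gap and yields $d_H(s,t)\leq\arcosh\bigl(1+\tfrac{1}{2}\cos^4(\theta/2)\bigr)\leq\arcosh(3/2)=\log\bigl((3+\sqrt{5})/2\bigr)<\log 3$, with the symmetric vertex cases handled as you say. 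So the citation buys brevity and matches the paper, while your sketch, completed by the one-line derivative bound above, gives an elementary proof with a slightly sharper constant; either is acceptable.
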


\subsection{A Property of Approximate Diametral Pairs}

We use the same notations as above, so 
$X$ is a  geodesic $\delta$-hyperbolic space, $P \subset X$ and $p_1 \in P$. Corollary~\ref{cor:Chepoi} shows that a pair $\hat a=f_P(p_1)$, $b=f_P(\hat a)$ is an approximate diametral pair. We will prove (Lemma~\ref{lem:farthest2}) that the midpoint $\hat m$
of $[\hat a,\hat b]$ is close to any geodesic $[q,f_P(q)]$. We first consider the midpoint $m^*$ of an exact diametral pair $a^*,b^*$.

\begin{lemma}\label{lem:farthest:main}
Let $(X,d)$ be a geodesic $\delta$-hyperbolic space such that every geodesic
triangle is $\delta'$-thin.
Let $q \in X$, $P \subset X$ and $q'=f_P(q)$.
Then the point $m$ along the geodesic segment $[q,q']$ 
that is at distance $\diam(P)/2-\delta$ from $q'$ satisfies $d(m,m^*) \leq \delta+2\delta'$.
\end{lemma}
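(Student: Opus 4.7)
The plan is to route from $m$ to $m^*$ through an intermediate point lying on the geodesic $[q',a^*]$, invoking $\delta'$-thinness in two triangles that share this side. The setup starts with Lemma~\ref{lem:Chepoi} applied to $(s,t,u,v)=(q,q',a^*,b^*)$: since $q'=f_P(q)$ gives $d(q,q') \geq \max\{d(q,a^*),d(q,b^*)\}$, we get $\diam(P)=d(a^*,b^*) \leq \max\{d(q',a^*),d(q',b^*)\}+2\delta$. Relabel the diametral pair so that $d(q',a^*) \geq d(q',b^*)$; then $d(q',a^*) \in [\diam(P)-2\delta,\diam(P)]$, and moreover $d(q',a^*) \geq \diam(P)/2$ (otherwise $d(a^*,b^*) \leq d(q',a^*)+d(q',b^*) < \diam(P)$).

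First, consider the geodesic triangle $[q,q',a^*]$. The tripod leg at $q'$ has length $\ell_{q'}=\tfrac12(d(q,q')+d(q',a^*)-d(q,a^*)) \geq \tfrac12 d(q',a^*) \geq \diam(P)/2-\delta$, where the first inequality uses $d(q,q') \geq d(q,a^*)$. Consequently both $m \in [q,q']$ (at distance $\diam(P)/2-\delta$ from $q'$) and the point $m_1 \in [q',a^*]$ at distance $\diam(P)/2-\delta$ from $q'$ fold to the same point on the $q'$-leg of the tripod, so $\delta'$-thinness yields $d(m,m_1) \leq \delta'$.

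Next, consider the triangle $[q',a^*,b^*]$. Its tripod leg at $a^*$ has length $\ell_{a^*}=\tfrac12(d(q',a^*)+\diam(P)-d(q',b^*)) \geq \diam(P)/2$, using $d(q',a^*) \geq d(q',b^*)$. The midpoint $m^* \in [a^*,b^*]$ lies at distance $\diam(P)/2$ from $a^*$, and the point $m_2 \in [q',a^*]$ at distance $\diam(P)/2$ from $a^*$ fold to the same point on the $a^*$-leg, giving $d(m^*,m_2) \leq \delta'$. Since $m_1$ and $m_2$ both live on $[q',a^*]$ at respective distances $\diam(P)/2-\delta$ and $d(q',a^*)-\diam(P)/2$ from $q'$, their distance is $|\diam(P)-\delta-d(q',a^*)| \leq \delta$ by the bound on $d(q',a^*)$. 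The triangle inequality then gives $d(m,m^*) \leq d(m,m_1)+d(m_1,m_2)+d(m_2,m^*) \leq \delta+2\delta'$.

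The main delicacy is verifying that each "fold" lands on the claimed leg of its tripod rather than crossing the branching vertex; this is exactly what the preliminary bounds $d(q',a^*) \geq \diam(P)-2\delta$ and $d(q',a^*) \geq d(q',b^*)$ guarantee, by lower-bounding $\ell_{q'}$ in the first triangle and $\ell_{a^*}$ in the second. Nothing else in the argument is quantitatively subtle, so once those leg-length inequalities are in hand the remaining steps are purely bookkeeping on arclengths along $[q',a^*]$.
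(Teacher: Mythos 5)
Your proof is correct and follows essentially the same route as the paper's: the same two thin triangles (the one containing $q,q'$ and the nearer diametral point, and the diametral triangle with $q'$), the same intermediate points on the common side, and the same arclength bookkeeping, differing only in the labeling of $a^*$ versus $b^*$ and in making the tripod-leg-length checks explicit.
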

\begin{proof}
	Without loss of generality, suppose that $d(a^*,q') \leq d(b^*,q')$. 	Then by Lemma~\ref{lem:Chepoi},
	\begin{equation}\label{eq:th:farthest1}
		d(b^*,q') \leq \diam(P) \leq d(b^*,q') +2\delta.
	\end{equation}
	Let $m$ and $m_1$ be the points at distance $\diam(P)/2-\delta$ from $q'$ 
	along a geodesic $[q',q]$ and a geodesic $[b^*,q']$, respectively. 
	As $d(q,q') \geq d(b^*,q)$, it follows that
	\begin{align*}
		(b^*|q)_{q'}&=\left(d(b^*,q')+d(q,q')-d(b^*,q) \right)/2\\ 
			& \geq  \diam(P)/2-\delta.
	\end{align*}
	As the triangle $[b^*,q,q']$ is $\delta'$-thin, it follows that 
	\begin{equation}\label{eq:mm1}
		d(m,m_1) \leq \delta'.
	\end{equation}

	We now consider the geodesic triangle $[a^*,b^*,q']$. 
	\begin{align*}
		(a^*|q')_{b^*} &= \left(d(a^*,b^*)+d(b^*,q')-d(a^*,q') \right)/2\\
			& =\left(\diam(P)+d(b^*,q')-d(a^*,q') \right) /2 \\
			& \geq \diam(P) /2 
	\end{align*}
	Let $m_2$ be the point along $[b^*,q']$ at distance $\diam(P)/2$ from $b^*$.
	As the triangle $[a^*,b^*,q']$ is $\delta'$-thin, we have 
	\begin{equation}\label{eq:m2m*}
		d(m_2,m^*) \leq \delta'.
	\end{equation}
	
	The points $m_1$ and $m_2$ are both along $[b^*,q']$, with $m_1$ being at distance
	$\diam(P)/2-\delta$ from $q'$ and $m_2$ being at distance $\diam(P)/2$ from $b^*$.
	So we have
	\[
		d(m_1,m_2) = |\diam(P)/2-\delta-d(b^*,q')+\diam(P)/2|=|\diam(P)-d(b^*,q')-\delta|.
	\]
	By inequality~\eqref{eq:th:farthest1}, it follows that $d(m_1,m_2) \leq \delta$.
	Then $d(m,m^*) \leq \delta+2\delta'$ follows from inequalities~\eqref{eq:mm1} and~\eqref{eq:m2m*}.
\end{proof}

We will not be able to use directly the lemma above in our construction, because we cannot compute the midpoint $m^*$ of an exact diametral pair in linear time. The lemma below allows us
to use the approximate midpoint $\hat m$, which can be computed is linear time.

\begin{lemma}\label{lem:farthest2}
	Let $(X,d)$ be a geodesic $\delta$-hyperbolic space 
	such that every geodesic triangle is $\delta'$-thin.
	Let $q \in X$, $P \subset X$ and
	$q'=f_P(q)$. Then the point $m$ along the geodesic segment $[q,q']$ 
	that is at distance $\diam(P)/2-\delta$ from $q'$ satisfies $d(m,\hat m) \leq 3\delta+4\delta'$.
	In particular, if $(X,d)=(\poincare^D,d_H)$, we have 
	$d(m,\hat m) \leq 7 \log 3 < 8$.
\end{lemma}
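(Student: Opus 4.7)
}
The plan is to interpose the midpoint $m^*$ of an exact diametral pair between $m$ and $\hat m$, so that the first leg $d(m,m^*)$ is controlled by Lemma~\ref{lem:farthest:main} directly, while the second leg $d(m^*,\hat m)$ is handled by applying Lemma~\ref{lem:farthest:main} a second time, this time with the approximate diametral pair $\hat a,\hat b$ playing the role of the input.

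First, I would invoke Lemma~\ref{lem:farthest:main} exactly as stated, obtaining
\[
d(m,m^*) \leq \delta + 2\delta'.
\]
Next I would apply Lemma~\ref{lem:farthest:main} again, but with the choices $q=\hat a$ and $q'=f_P(\hat a)=\hat b$, and define $\tilde m$ to be the point along the geodesic $[\hat a,\hat b]$ at distance $\diam(P)/2-\delta$ from $\hat b$. The lemma then gives
\[
d(\tilde m,m^*) \leq \delta+2\delta'.
\]

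The remaining step is to bound $d(\tilde m,\hat m)$, where $\hat m$ is the actual midpoint of $[\hat a,\hat b]$. Both points lie on the same geodesic $[\hat a,\hat b]$: $\hat m$ is at distance $d(\hat a,\hat b)/2$ from $\hat b$, while $\tilde m$ is at distance $\diam(P)/2-\delta$ from $\hat b$. By Corollary~\ref{cor:Chepoi} we have
\[
\diam(P)-2\delta \leq d(\hat a,\hat b) \leq \diam(P),
\]
so $d(\hat a,\hat b)/2$ lies in the interval $[\diam(P)/2-\delta,\diam(P)/2]$ and therefore differs from $\diam(P)/2-\delta$ by at most $\delta$. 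Consequently $d(\tilde m,\hat m)\leq \delta$.

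Chaining the three bounds via the triangle inequality yields
\[
d(m,\hat m) \leq d(m,m^*)+d(m^*,\tilde m)+d(\tilde m,\hat m) \leq (\delta+2\delta')+(\delta+2\delta')+\delta = 3\delta+4\delta',
\]
which is the claimed inequality. For the specialization to $(\poincare^D,d_H)$, Proposition~\ref{prop:HDhyperbolicity} gives $\delta=\log 3$ and Proposition~\ref{prop:HDthin} gives $\delta'=\log 3$, so the bound becomes $7\log 3$; a quick numerical check shows $7\log 3 < 8$. I do not expect a real obstacle here: the only subtlety is being careful that $\tilde m$ and $\hat m$ really lie on the same geodesic $[\hat a,\hat b]$ so that their distance is simply the absolute difference of their parameter values, which follows from the definition of $\hat m$ as the midpoint and of $\tilde m$ via Lemma~\ref{lem:farthest:main} applied with $q'=\hat b$.
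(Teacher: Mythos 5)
Your proposal is correct and follows essentially the same argument as the paper: the paper's point $m_3$ is exactly your $\tilde m$, obtained by the same second application of Lemma~\ref{lem:farthest:main} to the pair $\hat a$, $\hat b = f_P(\hat a)$, chained through $m^*$ and combined with the bound $d(\tilde m,\hat m)\leq\delta$ from Corollary~\ref{cor:Chepoi}. The only cosmetic difference is that the paper first merges the two applications of Lemma~\ref{lem:farthest:main} into $d(m,m_3)\leq 2\delta+4\delta'$ before adding the last $\delta$, which gives the same total $3\delta+4\delta'$.
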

\begin{proof}
	Let $m_3$ be the point of $[\hat a,\hat b]$ that is at distance $\diam(P)/2-\delta$
	from $\hat b$. By Lemma~\ref{lem:farthest:main}, we have $d(m_3,m^*) \leq \delta+2\delta'$ 
	and thus $d(m,m_3) \leq 2\delta+4\delta'$. 
	
	The points $m_3$ and $\hat m$ are along $[\hat a,\hat b]$, with $m_3$ at distance
	$\diam(P)/2-\delta$ and $d(\hat a,\hat b)/2$ from $\hat b$, respectively. So we have
	\[
		d(m_{3}, \hat m)=|d(\hat a,\hat b)+2\delta-\diam(P)|/2 
	\]
	which is at most $\delta$ by Corollary~\ref{cor:Chepoi}. It follows that $d(m,\hat m) \leq 3\delta+4\delta'$.
\end{proof}

\section{Coreset for Large Diameter Input}\label{sec:far}

We now consider farthest-neighbor searching when the input point set has diameter
larger than 5. So $P=\{p_1,\dots,p_n\}$ is a subset of $\poincare^D$, and 
we assume that $\Delta_1 \geq 5$, where $\Delta_1=d_{H} (p_1,f_P(p_1))$, 
which implies that $\diam(P) \geq 5$.
We describe below the construction of our coreset $P_\eps$ for farthest-point queries. 

We first compute an approximate diametral pair $\hat a, \hat b$, where $\hat a=f_P(p_1)$
and $\hat b=f_P(\hat a)$. Let  $\hat m$ be the midpoint of $[\hat a,\hat b]$. Without loss of generality, we assume that $\hat m=O$, as otherwise, we can apply the isometry $\mu_{\hat{m}}$ to $P$.

In the same way as we did for the small diameter case (Proposition~\ref{prop:grid}),
we construct a set $G$ of $O(1/\eps^D)$ grid points in the hyperbolic ball
$B_0=B_H(O,8 \log 3)$ such that for any point $p \in B_0$, there is a point $g \in G$
that satisfies $d_H(p,g) \leq \eps/2$. For each point in $g \in G$, we insert
$f_P(g)$ into our coreset $P_\eps$. There are $O(1/\eps^D)$ such points.

Let $S_0$ be the sphere centered at $O=\hat m$ with hyperbolic radius $7 \log 3$.
We construct a set $H \subset S_0$, called a $\sqrt{\eps}/(8\pi)$-net, 
of size $O(1/\eps^{(D-1)/2})$. It has the property that for any point $p \in S_0$, there 
is a point $h \in H$ such that $d_H(p,h) \leq \sqrt\eps/(8\pi)$. Such a set can be easily constructed, for instance by constructing a regular grid of mesh $\Theta(\sqrt \eps)$ on the box circumscribed to $S_0$, and projecting its vertices onto $S_0$. 

For each point $h \in H$, we apply $\mu_{h}$ to all the points in $P$, which takes $O(n)$ time. After this transformation, we have $h=O$. Let $\mathcal C$ be a partition of the space around $O$ into $O(1/\eps^{(D-1)/2})$ simplicial cones of angular diameter $\sqrt{\eps}/2$. This partition can be easily constructed using a grid circumscribed to $S_0$. (See a more detailed description in the book by Narasinham and Smid~\cite{narasimhan_smid_2007}.)

For each cone $C \in \mathcal C$ that contains 
at least one point of $\mu_h(P)$, we
insert into our coreset $P_\eps$ a point $p_\eps \in P$ such that $\mu_h(p_\eps)$ is a point
in $C \cap \mu_h(P)$ that is farthest from $O$. For each $h \in H$, 
there are $O(1/\eps^{(D-1)/2})$ such points, so the size of $P_\eps$ remains $O(1/\eps^D)$.

It remains to prove that this construction is correct. So we need to argue that
for any query point $q \in \poincare^D$, there is a point $p_\eps \in P_\eps$ such
that $d_H(q,p_\eps) \geq d_H(q,f_P(q))-\eps$. 

There are two cases. First, suppose that $q \in B_0$. Then there is a point $g \in G$
such that $d_H(q,g) \leq \eps/2$, and the point $g'=f_P(g)$ is in $P_\eps$.
It follows that
\begin{align*}	
	d_H(q,g') & \geq d_H(g,g')-d_H(q,g) \\
		& \geq d_H(g,f_P(q)) - \eps/2 \\
		& \geq d_H(q,f_P(q)) - d_H(q,g) -\eps/2 \\
		& \geq d_H(q,f_P(q)) - \eps.
\end{align*}
So we can take $p_\eps=g'$.

Now suppose that $q \notin B_0$. Let $q'=f_P(q)$, and let $m$ be the point along
$[q,q']$ that is at distance $\diam(P)/2-\log 3$ from $q'$.
By Lemma~\ref{lem:farthest2}, this point $m$ is inside $B_H(O,7 \log 3)$, and thus
the geodesic $[q,q']$ crosses $S_0$ at a point $s$ between $q$ and $m$.
Then we must have $d_H(q',s) \geq \diam(P)/2-\log 3 > 5/2-\log 3$.

There is a point $h \in H$ such that $d_H(s,h) \leq \sqrt{\eps}/4$. As $q \notin B_0$
and $h \in H$, 
we must have $d_H(q,h) \geq \log 3$. As $d_H(q',s) > 5/2-\log 3$ and $\eps<1$,
we also have $d_H(q',h) \geq \log 3$. 

Let $q_h=\mu_h(q)$ and $q'_h=\mu_h(q')$. As $q$ and $q'$ are at hyperbolic distance at least 
$\log 3$ from $h$, we have $d_H(O,q_h) \geq \log 3$ and $d_H(O,q'_h) \geq \log 3$.
In addition, the geodesic $[q_h,q'_h]$ goes through $\mu_h(s)$, which is at distance at most
$\sqrt{\eps}/(8\pi)$ from $O$. So by Lemma~\ref{lem:anglegeodesic2}, the 
angle $\psi=\pi-\angle q_h O q'_h$ satisfies $\psi \leq \sqrt{\eps}/2$.

Let $C$ be the cone in $\mathcal C$ that contains $q'_h$. Let $c$ denote the corresponding
point that we inserted into $P_\eps$, so $c$ is a point in $C \cap \mu_h(P)$ that is furthest
from $O$. In particular, we have $d_H(O,c) \geq d_H(O,q'_h)$, 
and there is a point $p_\eps \in P_\eps$
such that $c=\mu_h(p_\eps)$. As $c$ and $q'_h$ are in the same cone, the angle 
$\psi'=\pi-\angle q_hOc$ satisfies $\psi' \leq \psi+\sqrt{\eps}/2 \leq \sqrt{\eps}$.
Then we have
\begin{align*}
	d_H(q,p_\eps) &= d_H(q_h,c) \\
		& \geq d_H(O,q_h)+d_H(O,c) - \eps & \text{by lemma~\ref{lem:theta}} \\
		& \geq d_H(O,q_h)+d_H(O,q'_h) - \eps \\
		& \geq d_H(q_h,q'_h)-\eps \\
		& = d_H(q,q') - \eps \\
		& = d_H(q,f_P(q)) - \eps.
\end{align*}

So we just proved the following:
\begin{lemma}\label{lem:far}
	Let $P=\{p_1,\dots,p_n\}$ be a set of $n$ points in $\poincare^D$ 
	such that $\Delta_1=d_H(p_1,f_P(p_1))$ satisfies $\Delta_1 \geq 5$.
	For any $\eps$ such that $0 < \eps < 1$,
	we can construct in $O\left(n/\eps^D\right)$ time a 
	coreset $P_\eps \subset P$ for farthest-point queries of 
	size $|P_\eps|=O(1/\eps^D)$. In particular, for any query point $q \in \poincare^D$,
	there is a point $p_{\eps} \in P_{\eps}$ such that 
	$
		d_H(q,p_{\eps}) \geq d_H(q,f_P(q))-\eps.
	$
\end{lemma}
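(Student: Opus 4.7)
The plan is to extend the bucketing idea of Lemma~\ref{lem:small} into two coordinated constructions that together cover all possible positions of the query point $q$. The guiding geometric fact, coming from Lemma~\ref{lem:farthest2} applied to $\poincare^D$ with $\delta=\delta'=\log 3$, is that for every $q$ and $q'=f_P(q)$, the point $m \in [q,q']$ at distance $\diam(P)/2-\log 3$ from $q'$ lies within hyperbolic distance $7\log 3 < 8$ of the approximate midpoint $\hat m$ of the pair $(\hat a,\hat b)=(f_P(p_1),f_P(f_P(p_1)))$. So after applying $\mu_{\hat m}$ to send $\hat m$ to $O$, the geodesic $[q,q']$ either stays entirely inside a fixed ball $B_0=B_H(O,8\log 3)$ (the near regime) or enters and leaves $B_0$ (the far regime), and in both cases its behavior near $O$ can be captured with a bounded-complexity structure.

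For the near regime I would build, exactly as in Section~\ref{sec:smalldiameter}, a grid $G$ of $O(1/\eps^D)$ points inside $B_0$ so that every point of $B_0$ has a hyperbolic-distance-$\eps/2$ neighbor in $G$; this is legitimate because $B_0$ has bounded Euclidean radius, so $d_H$ and the Euclidean metric are within a constant factor (Lemma~\ref{lem:smalldiameter1} generalizes to radius $8\log 3$). For each $g \in G$, insert $f_P(g)$ into $P_\eps$. A single triangle-inequality step then gives additive error $\eps$ for queries $q\in B_0$.

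For the far regime the strategy is to reduce to a situation where $[q,q']$ passes through $O$ and then discretize directions. Because the crossing point $s$ of $[q,q']$ with the sphere $S_0$ of radius $7\log 3$ around $O$ can lie anywhere on $S_0$, I would first take a $\sqrt{\eps}/(8\pi)$-net $H \subset S_0$ of size $O(1/\eps^{(D-1)/2})$, and for each $h \in H$ apply the isometry $\mu_h$ (Proposition~\ref{prop:translation}) and partition the ambient space around the new origin into $O(1/\eps^{(D-1)/2})$ simplicial cones of angular diameter $\sqrt{\eps}/2$. In each cone I would store, back in the original coordinates, the preimage of the point of $\mu_h(P)$ that is farthest from $O$. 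This yields $|P_\eps|=O(1/\eps^D)$ and the construction runs in $O(n/\eps^D)$ time.

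The main obstacle is the correctness proof in the far regime. Given $q\notin B_0$, pick $h\in H$ within hyperbolic distance $\sqrt{\eps}/(8\pi)$ of $s$; the bounds $d_H(q',s)\geq \diam(P)/2-\log 3 > 5/2-\log 3$ together with $q\notin B_0$ should give $d_H(O,\mu_h(q))\geq \log 3$ and $d_H(O,\mu_h(q'))\geq \log 3$, which is precisely the hypothesis needed to invoke Lemma~\ref{lem:anglegeodesic2} and conclude $\psi=\pi-\angle \mu_h(q)\,O\,\mu_h(q') \leq \sqrt{\eps}/2$. The chosen coreset point $p_\eps$ in the cone containing $\mu_h(q')$ then satisfies $d_H(O,\mu_h(p_\eps))\geq d_H(O,\mu_h(q'))$ while the angle $\pi-\angle \mu_h(q)\,O\,\mu_h(p_\eps)$ is at most $\sqrt{\eps}$, and Lemma~\ref{lem:theta} finally converts this into the additive loss $d_H(q,p_\eps) \geq d_H(q,q')-\eps$. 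The delicate part is calibrating the two levels of discretization (the net $H$ on $S_0$ and the angular diameter of the cones) so that both angular deviations sum to something whose square stays within the $\eps$ budget, and verifying that all distances to $h$ really exceed $\log 3$ so that Lemma~\ref{lem:anglegeodesic2} applies.
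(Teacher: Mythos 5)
Your proposal is correct and follows essentially the same route as the paper's proof: localize the geodesic $[q,f_P(q)]$ near $O$ via Lemma~\ref{lem:farthest2}, handle $q\in B_0$ with the grid $G$, and handle $q\notin B_0$ by choosing $h\in H$ near the crossing point $s\in S_0$, applying $\mu_h$, bounding the angle via Lemma~\ref{lem:anglegeodesic2}, and converting to an additive bound with Lemma~\ref{lem:theta}. The calibration you flag as delicate works out exactly with your chosen constants (the $\sqrt{\eps}/(8\pi)$-net gives angle at most $\sqrt{\eps}/2$, the cones add another $\sqrt{\eps}/2$, and $d_H(q',h)\geq 5/2-\log 3-\sqrt{\eps}/(8\pi)\geq\log 3$), which is precisely how the paper finishes.
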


We can now combine the result above for large diameter input with our result for small diameter.
When $\Delta_1 \leq 5$, we construct a coreset $P_\eps'$ with
$\eps'=\eps/10$ according to Lemma~\ref{lem:small}.
In this case, $\diam(P) \leq 5$, so it gives a relative error $\eps$.
When $\Delta_1 \geq 5$, we apply Lemma~\ref{lem:far}. As $\diam(P) \geq 5$, it gives
an additive error at most $\eps$. So we obtain the following:

\begin{theorem}\label{th:faralgo}
	Let $D$ be a fixed integer and $0 < \eps < 1$.
	Let $P$ be a set of $n$ points in $\poincare^D$.
	We can construct in $O\left(n/\eps^D\right)$ time a 
	coreset $P_\eps \subset P$ for farthest-point queries of 
	size $|P_\eps|=O(1/\eps^D)$. In particular, for any query point $q \in \poincare^D$,
	there is a point $p_{\eps} \in P_{\eps}$ such that 
	$
		d_H(q,p_{\eps}) \geq d_H(q,f_P(q))-\eps
	$
	and $d_H(q,p_{\eps}) \geq (1-\eps)d_H(q,f_P(q))$.
\end{theorem}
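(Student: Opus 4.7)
The plan is to reduce Theorem~\ref{th:faralgo} to Lemmas~\ref{lem:small} and~\ref{lem:far} by a simple case distinction on $\Delta_1=d_H(p_1,f_P(p_1))$, which can be computed by brute force in $O(n)$ time. Each of the two lemmas already supplies exactly one of the required error bounds (multiplicative in the small-diameter case, additive in the large-diameter case), so the only work is to show that the missing bound follows for free once the threshold is set at $\Delta_1=5$.

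First I would handle the small-diameter branch $\Delta_1 \leq 5$ by calling Lemma~\ref{lem:small} with parameter $\eps'=\eps/10$. Tracing through its proof, the returned point $p_\eps$ satisfies both $d_H(q,p_\eps)\geq (1-\eps')d_H(q,f_P(q))$ and $d_H(q,p_\eps)\geq d_H(q,f_P(q))-\eps'\Delta_1/2$ (the additive form appears explicitly in the chain of inequalities that concludes the proof of Lemma~\ref{lem:small}). Since $\Delta_1\leq 5$, the additive slack is at most $5\eps'/2<\eps$, while the multiplicative deficit is $\eps'<\eps$, so both required bounds hold in this branch.

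Next, for the large-diameter branch $\Delta_1 \geq 5$, I would invoke Lemma~\ref{lem:far} directly to obtain a coreset with additive error $\eps$ in $O(n/\eps^D)$ time. The multiplicative bound then comes for free from Lemma~\ref{lem:half}, whose proof does not actually use $\Delta_1\leq 5$ and therefore applies here as well, yielding $d_H(q,f_P(q))\geq \Delta_1/2\geq 5/2$. Consequently
\begin{equation*}
d_H(q,p_\eps)\;\geq\; d_H(q,f_P(q))-\eps \;\geq\; \left(1-\frac{2\eps}{5}\right)d_H(q,f_P(q)) \;\geq\; (1-\eps)\,d_H(q,f_P(q)),
\end{equation*}
as required.

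I do not expect any genuine obstacle in this combination step — the only subtlety is choosing the constants so that the two conversions (multiplicative-to-additive on the small side, additive-to-multiplicative on the large side) both go through. Setting the threshold at $\Delta_1 = 5$ and using $\eps'=\eps/10$ in the small branch is enough; the running time is $O(n/\eps^D)$ and the coreset size is $O(1/\eps^D)$ in both branches, matching the theorem's claim.
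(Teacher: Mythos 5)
Your proposal is correct and follows essentially the same route as the paper: split on $\Delta_1$ at the threshold $5$, invoke Lemma~\ref{lem:small} with $\eps'=\eps/10$ (reading off the additive slack $\eps'\Delta_1/2\leq\eps/4$ from its proof) in the small case, and invoke Lemma~\ref{lem:far} in the large case, converting the additive bound to the multiplicative one via $d_H(q,f_P(q))\geq\Delta_1/2\geq 5/2$. Your write-up actually spells out the two conversions more explicitly than the paper's brief combination paragraph, but the argument and constants are the same.
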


%
\bibliographystyle{plain}
\bibliography{hdiameter}

\end{document}